\newtheorem{theorem}{Theorem}
\newtheorem{lemma}[theorem]{Lemma}
\title{HF-NTT: Hazard-Free Dataflow Accelerator for Number Theoretic Transform}
\author{
    \IEEEauthorblockN{Xiangchen Meng}
    \IEEEauthorblockA{
        Microelectronics Thrust\\
        The Hong Kong University of \\Science and Technology (Guangzhou) \\
        Guangzhou, China\\
        xmeng027@connect.hkust-gz.edu.cn
    }
    \and
    \IEEEauthorblockN{Zijun Jiang}
    \IEEEauthorblockA{
        Microelectronics Thrust\\
        The Hong Kong University of \\Science and Technology (Guangzhou)\\
        Guangzhou, China\\
        zjiang438@connect.hkust-gz.edu.cn
    }
    \and
    \IEEEauthorblockN{Yangdi Lyu}
    \IEEEauthorblockA{
        Microelectronics Thrust\\
        The Hong Kong University of \\Science and Technology (Guangzhou)\\
        Guangzhou, China\\
        yangdilyu@hkust-gz.edu.cn
    }
}
\begin{document}
\maketitle

\begin{abstract}
    Polynomial multiplication is one of the fundamental operations in many applications, such as fully homomorphic encryption (FHE). However, the computational inefficiency stemming from polynomials with many large-bit coefficients poses a significant challenge for the practical implementation of FHE. The Number Theoretic Transform (NTT) has proven an effective tool in enhancing polynomial multiplication, but a fast and adaptable method for generating NTT accelerators is lacking. In this paper, we introduce HF-NTT, a novel NTT accelerator. HF-NTT efficiently handles polynomials of varying degrees and moduli, allowing for a balance between performance and hardware resources by adjusting the number of Processing Elements (PEs). Meanwhile, we introduce a data movement strategy that eliminates the need for bit-reversal operations, resolves different hazards, and reduces the clock cycles. Furthermore, Our accelerator includes a hardware-friendly modular multiplication design and a configurable PE capable of adapting its data path, resulting in a universal architecture. We synthesized and implemented prototype using Vivado 2022.2, and evaluated it on the Xilinx Virtex-7 FPGA platform. The results demonstrate significant improvements in Area-Time-Product (ATP) and processing speed for different polynomial degrees. In scenarios involving multi-modulus polynomial multiplication, our prototype consistently outperforms other designs in both ATP and latency metrics.
\end{abstract}

\section{Introduction}
In the landscape of the internet, the exchange of information between clients and cloud services poses a growing threat to privacy. As user awareness of privacy protection increases, a range of protective measures have emerged to address these concerns. Fully homomorphic encryption, based on the Ring Learning With Errors (RLWE) problem, is a promising approach to ensuring the privacy and security of data exchanges between local and cloud environments. This advanced encryption framework allows for the computation of ciphertext in the cloud without the need for a decryption key.

In the fully homomorphic encryption framework, plaintext data is converted into two polynomials, each containing a large number of ($2^{10}$ to $2^{18}$) elements with coefficients of significant size (hundreds of bits). A wide range of operations, such as homomorphic addition, multiplication, and permutation, can be carried out in the cloud without needing a decryption key. Homomorphic multiplication on polynomial rings is a crucial operation and a primary bottleneck, requiring numerous sub-operations and substantial computing resources. As a result, the acceleration strategy for fully homomorphic encryption primarily focuses on optimizing multiplication. NTT is commonly used to expedite polynomial multiplication by converting polynomials from coefficient notation to point-valued notation~\cite{2016Speeding}. 
\begin{figure}[t]
    \centering
    \includegraphics[width=0.98\columnwidth]{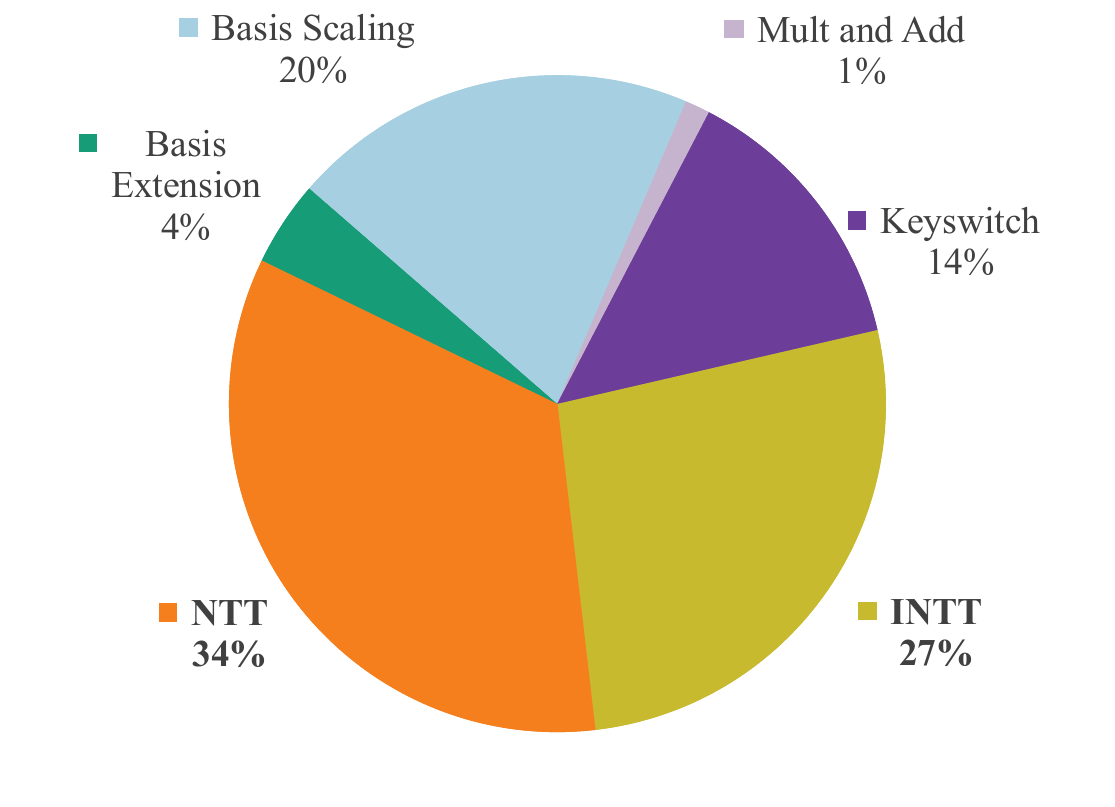}
    \caption{The Time Breakdown of Homomorphic Multiplication (Polynomial Degree $N = 4096, \left\lceil\log q \right\rceil = 32$)}
    \label{fig: The Time Breakdown of Homomorphic Multiplication}
\end{figure}
Figure~\ref{fig: The Time Breakdown of Homomorphic Multiplication} illustrates the distribution of time for homomorphic multiplication operations on a 12th Gen Intel(R) Core(TM) i7-12700 CPU platform. The analysis reveals that NTT and its inverse (INTT) processes collectively account for over 50\% of the total execution time in homomorphic multiplication. It is noteworthy that NTT and INTT are not restricted to basic multiplication transformations. For example, the Keyswitch operation, a crucial component of the multiplication process, also involves NTT operations. As a result, the majority of FHE hardware accelerators primarily focus on optimizing the NTT operation.

Hardware acceleration platforms for NTT include FPGAs~\cite{9976314, 9943988}, GPUs~\cite{10071017, 10031464}, and ASICs~\cite{2021F1}. NTT-based hardware accelerators can process large datasets and perform intensive tasks much faster than CPUs, resulting in over 100 times speedup compared to CPU implementations~\cite{2021F1}. Recent NTT acceleration research emphasizes the reconfigurability of hardware accelerators, enabling them to adapt to various tasks. As a result, FPGAs are preferred due to their flexibility and optimal balance between performance and power efficiency. However, many hardware designs for polynomial multiplication~\cite{7002209,8866755,9114594,8928952} were limited by fixed polynomial degrees and a limited number of PEs that could not be changed, restricting their applicability to certain scenarios. To enhance versatility, Hu \textit{et al.}~\cite{9943988} proposed an area-efficient and configurable NTT-based polynomial multiplier capable of adjusting the number of PEs based on different parameters. However, it was only suitable for polynomials with small bit-width coefficients. As designs became more complex, the scalability of hardware solutions faced challenges in performance and area efficiency when dealing with larger moduli and ciphertext lengths, leading to difficulties in control logic implementation~\cite{9171507}. Additionally, addressing memory overhead posed another challenge. For example, Du \textit{et al.}~\cite{10.1145/2902961.2902969} used a Ping-Pong mechanism to build memory, resulting in complex control logic and double the storage area. Regarding control logic and memory issues, Mu \textit{et al.}~\cite{9882331} proposed an NTT accelerator design method without memory conflict. This method allows for the parameters to be defined based on the specific task. However, it failed to address the limitations of coefficients exceeding large bits in FHE, limiting its applicability.

To address the diverse requirements and reconfigurability demands of NTT accelerators, we propose HF-NTT, an innovative NTT accelerator that optimizes resource utilization and offers flexibility for a range of application scenarios.

\begin{enumerate} 
\item We introduce a new NTT data storage and movement method to eliminate memory conflicts between PEs and memory as well as stalls in PEs for accelerators with varying numbers of PEs.

\item We utilize Residue Number System (RNS) to decompose coefficients with large bit widths, enhancing hardware platform adaptability and resource utilization efficiency.

\item We introduce a configurable butterfly unit (CBU) capable of executing multiple operations, including NTT, multiplication, and Inverse NTT. We optimize the multiplier through Barrett algorithm and Digital Signal Processing (DSP) to achieve improved performance-area efficiency on FPGA.

\item Our HF-NTT accelerator, implemented on the Xilinx VIRTEX-7 FPGA platform, demonstrates high performance and area efficiency. Through experimentation and exploration of design parameters, our approach exhibits adaptability to large designs.
\end{enumerate}

The remainder of this paper is structured as follows: Section~\ref{sec:pre} introduces NTT operation and Barrett modular multiplication. Section~\ref{sec:method} provides our data storage method and data movement with hazards. Section~\ref{sec:arch} proposes a detailed description of our hardware design, HF-NTT. Section~\ref{sec:exp} analyzes experimental results, comparing our approach to state-of-the-art designs. Section~\ref{sec:conclude} concludes with contributions and results.

\section{Preliminary}
\label{sec:pre}
\subsection{Polynomial Multiplication Based on NTT}
In the context of RLWE, polynomials are defined over the ring $R_q[x] = \mathbb{Z}_q[x]/\left(x^N+1\right)$, where $\mathbb{Z}_q[x]$ represents a quotient ring with prime $q$, $N$ denotes the polynomial length, and $q$ fulfills the requirement: $q\equiv1\mod2N$. For the school-book algorithm, the polynomial multiplication between $\mathbf{a}(x) = \sum_{n=0}^{N-1}a_nx^n$ and  $\mathbf{b}(x) = \sum_{m=0}^{N-1}b_mx^m$ is described as $ \mathbf{c}(x) = \sum_{n=0}^{N-1}\sum_{m=0}^{N-1}a_nb_mx^{m+n}$ with $O\left(N^2\right)$  complexity. 

NTT is used to transform a polynomial from coefficient to point-value representation. Following point-wise multiplication, the Inverse NTT (INTT) operation is necessary to revert the value representation to its original coefficient representation. Through them reducing its complexity from $O(N^2)$ to $O(N\log N)$ to expedite polynomial multiplication. See Algorithm~\ref{alg:algorithm 1} for a detailed explanation.

\begin{algorithm}[h]
\caption{Polynomial Multiplication Based on NTT} 
\label{alg:algorithm 1}
\KwIn{Polynomials $\mathbf{a}(x),\mathbf{b}(x) \in \mathbb{Z}_q[x]/(x^N+1)$, \\
      $2N$-th primitive root of unity $\omega_{2N} \in \mathbb{Z}_q$, $N = 2^k$} 
\KwOut{$\mathbf{c}(x)=\mathbf{a}(x)\cdot\mathbf{b}(x) \in \mathbb{Z}_q[x]/(x^N+1)$}

$\mathbf{A}(x) = \sum_{n=0}^{N-1}a_n\omega_{2N}^{nk}\mod q$ \tcp*[r]{$O(N\log N)$}

$\mathbf{B}(x) = \sum_{n=0}^{N-1}b_n\omega_{2N}^{nk}\mod q$ \tcp*[r]{$O(N\log N)$}

$\mathbf{C}(x) = \mathbf{A}(x) \odot \mathbf{B}(x)$ \tcp*[r]{$O(N)$}

$\mathbf{c}(x) = N^{-1}\sum_{n=0}^{N-1}C_n\omega_{2N}^{-nk}\mod q$ \tcp*[r]{$O(N\log N)$}

\Return{$\mathbf{c}(x)$}
\end{algorithm}


In Algorithm~\ref{alg:algorithm 1}, the symbol $\odot$ denotes dot-wise multiplication, while $\omega_N^n = g^{\frac{(q-1)n}{N}}$ mod $q$ and $g$ is the primitive root of the prime $q$. The \textbf{NTT} and \textbf{INTT} adopt Cooley-Tukey (CT) and Gentlemen-Sande (GS) algorithms, to finish the iterative calculation process, which could be formulated in line 1, 2 and line 4.

\begin{figure*}[ht]
    \centering
    \includegraphics[width = 1\textwidth]{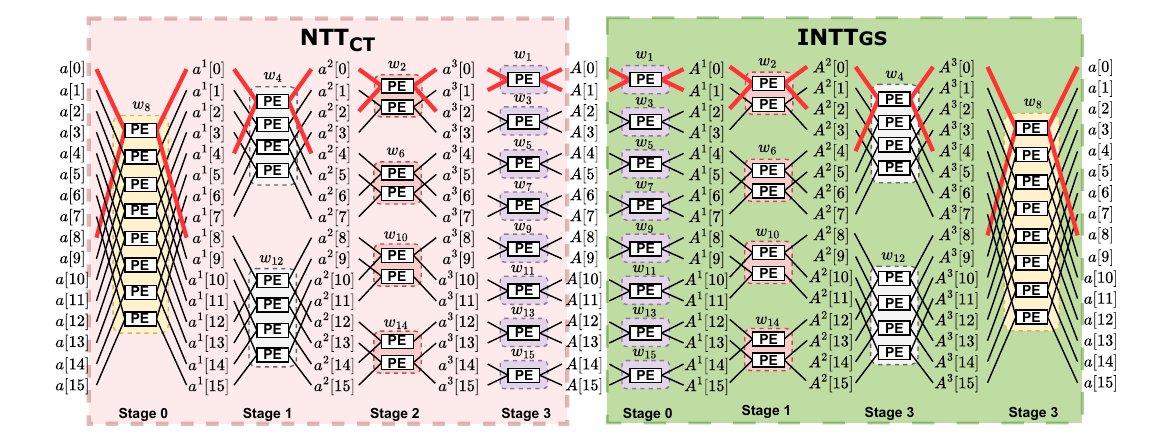}
    \caption{Dataflow of NTT and INTT Using GS and CT Algorithms ($N = 16$, $a$: coefficient representation, $A$: point-value representation)}
    \label{fig: Dataflow of NTT Operation for 16 Polynomial Degree}
\end{figure*}

\subsection{Barrett Modular Multiplication}
\begin{algorithm}[hb]
\caption{Barrett Modular Multiplication}
\label{alg:SoftwareBarrettModMul}
\KwIn{$a, b, q \in \mathbb{Z}^+$, $a, b < q$}
\KwOut{$z = (a \cdot b) \pmod{q}$}
$k \gets \lceil\log_2q\rceil$\;
$m \gets \lfloor 2^{2k} / q \rfloor$\;
$t_1 \gets a \cdot b$\;
$t_2 \gets t_1 \cdot m >> 2k$\;
$t_3 \gets t_2 \cdot q$\;
$t_4 \gets t_1 - t_3$\;
\If{$t_4 >= q$}{
    $z \gets t_4 - q$\;
}\Else {
    $z \gets t_4$\;
}
\Return{$z$}\;
\end{algorithm}

Inspired by Barrett Reduction~\cite{barrett1986implementing}, Barrett modular multiplication is a method that efficiently calculates the product of two integers within a specified modulus $q$. This approach, as presented in Algorithm~\ref{alg:SoftwareBarrettModMul}, utilizes multiplicative and bit shift operations to circumvent the slower process of division, thereby improving computational efficiency. This technique is particularly useful in cryptography, where fixed moduli are prevalent, and is essential for ensuring both security and efficiency in modular arithmetic. While Algorithm~\ref{alg:SoftwareBarrettModMul} effectively removes the need for division by precomputing the $k$ and $m$ and storing them in the register. The multiplication in line 4 is a bottleneck when implementing in DSPs. If both variables $a$ and $b$ are 32-bit integers, the resulting width of $t_1 \cdot m$ would be 96 bits. In Section~\ref{subsec:multiplication}, we will present our hardware-friendly implementation of Barrett Reduction in Algorithm~\ref{alg:HardwareBarrettModMul}.

\subsection{Residue Number System}
In FHE applications, adjusting the width of the coefficients in the polynomials is crucial for managing the trade-offs between noise tolerance, security, and computational efficiency. For example, a 4096-degree polynomial requires coefficients with a bit width greater than 109 bits~\cite{HomomorphicEncryptionSecurityStandard}. As a result, the computations in FHE require significant computing resources to handle the large data sizes.

RNS is a mathematical technique that decomposes large bit-width polynomials into multiple smaller bit-width polynomials, making them more manageable and adaptable to hardware platforms. By decomposing a polynomial in the $\mathbb{Z}_{Q}$ ring into $N_\text{q}$ polynomials in the rings $\mathbb{Z}{q_1},\mathbb{Z}{q_2},\dots,\mathbb{Z}{q_{N_\text{q}}}$, where $q_1,q_2,q_3\dots q_{N_\text{q}}$ are coprime and $Q=q_1q_2q_3\dots q_{N_\text{q}}$, RNS enables the transformation of polynomial coefficients into hardware-friendly representations. This approach effectively minimizes resource overhead in 32, 48, or 64-bit FPGA implementations, leading to improved performance and optimized resource utilization.

\section{Hazard-Free Number Theoretic Transform}
\label{sec:method}

While NTT expedites polynomial multiplications, there are both spatial and temporal dependencies in the process. Effectively managing these dependencies is essential for maximizing the efficiency of NTT-based polynomial multiplication. In this section, we propose a data layout and memory access pattern to eliminate all the hazards introduced by these dependencies in our accelerator.

\subsection{Spatial and Temporal Dependencies}
The NTT process for a polynomial of degree $N$ comprises $\log N$ stages, requiring $N/2 \cdot \log N$ butterfly operations. Figure~\ref{fig: Dataflow of NTT Operation for 16 Polynomial Degree} shows the process of both NTT with the CT algorithm and INTT with the GS algorithm for a polynomial of degree $16$. There are four stages in both NTT and INTT, with each stage routes differently indexed coefficients to the processing element (PE).
\begin{enumerate}
    \item Spatial dependencies. In all stages, for any PE to compute without delays, its two operands have to come from different memory banks. For instance, with $N = 16$, $a[0]$ is paired with $a[8], a[4], a[2]$, and $a[1]$ in different stages. To achieve zero hazards from the spatial dependencies, $a[0]$ has to reside in a different memory bank with all the paired elements.
    \item Temporal dependencies. Significant data dependencies exist between two adjacent stages, with each subsequent stage relying on results from its predecessor. As a result, the Read-after-Write (RAW) hazard could become a bottleneck for the accelerator, potentially leading to performance-degrading stalls.
\end{enumerate}

\subsection{Data Layout}
\begin{figure}[h]
    \centering
    \includegraphics[width=1\columnwidth]{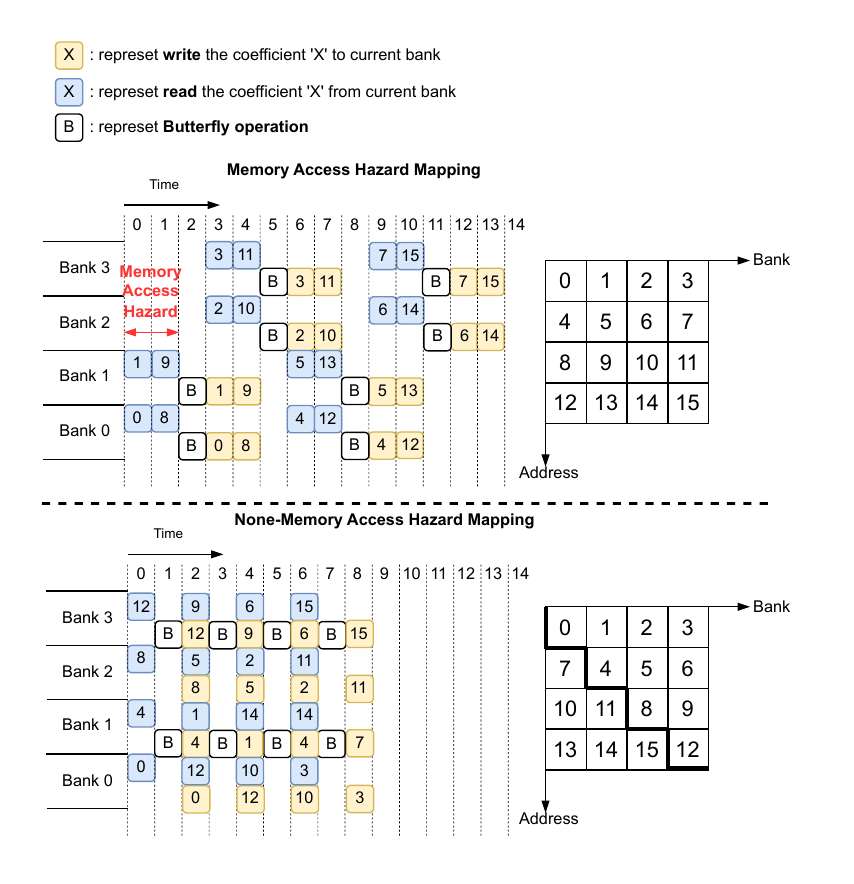}
    \caption{Example of Memory Access Hazard}
    \label{fig: Example of Memrory Access Hazard}
\end{figure}

In this section, we propose a data layout for coefficients to address the potential hazard introduced by spatial dependencies. Instead of storing all the coefficients in sequential order, we propose a method where coefficient are shifted one position to the right for each consecutive row, with the example of $N = 16$ shown in Figure~\ref{fig: Example of Memrory Access Hazard}. As we can see, sequentially storing the coefficients can lead to memory access hazard as $a[0]$ and $a[8]$ are in the same memory bank. By contrast, our proposed data layout, as demonstrated in Figure~\ref{fig: Example of Memrory Access Hazard} with $N = 16$, eliminates such hazards. We will further demonstrate the efficacy of this arrangement for any degree $N$.

Formally, the memory position of $a[i]$ is given by Equation~\ref{eq:storage_location}, where the $Bank$ number is shifted right by $Address$ positions compared to sequential storage. 
\begin{equation}
(Address, Bank) = \left(\left\lfloor \frac{i}{n} \right\rfloor, (i \bmod n + \left\lfloor \frac{i}{n} \right\rfloor) \bmod n\right)
\label{eq:storage_location}
\end{equation}
where $n = \sqrt{N}$.

Next, we will prove that $a[i]$ is in a different bank with all coefficients in $\{a[i \pm 1], a[i \pm 2], a[i \pm 4], ..., a[i \pm 2^t], ..., a[i \pm N/2]\}$, which is a sufficient condition to guarantee no memory access hazard in the butterfly operations.

\begin{theorem}
    For the data layout in Equation~\ref{eq:storage_location}, $a[i]$ is in a different bank with all coefficients in $\{a[i + 1], a[i + 2], a[i + 4], ..., a[i + 2^t], ..., a[i + N/2]\}$ for $N > 4~(n > 2)$. 
\end{theorem}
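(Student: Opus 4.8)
The plan is to read off from Equation~\ref{eq:storage_location} the \emph{bank function} $B(i) = \big(\lfloor i/n\rfloor + (i \bmod n)\big) \bmod n$ and to show directly that $B(i) \neq B(i+2^t)$ for every admissible shift $2^t \in \{1,2,\dots,N/2\}$; since operands lying in distinct banks can be fetched in the same cycle, this is precisely the claimed absence of hazards. First I would fix the representation $i = qn + r$ with $q = \lfloor i/n\rfloor$ and $r = i \bmod n$, $0 \le r < n$, so that $B(i) = (q+r)\bmod n$. Because the design takes $n = \sqrt N$ with $N = 2^k$ (so $k$ is even), $n = 2^{s}$ with $s = k/2$ is itself a power of two, and the hypothesis $n>2$ reads $s\ge 2$; the relevant shift exponents are $0 \le t \le 2s-1$.

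The core step is to isolate how a shift changes the bank. Writing $d = 2^t$ and dividing once with remainder, $r + d = q_1 n + r_1$ with $0\le r_1 < n$ and $q_1 = \lfloor (r+d)/n\rfloor$, gives $i+d = (q+q_1)n + r_1$, hence $B(i+d) = (q+q_1+r_1)\bmod n$. Substituting $r_1 = r+d-q_1 n$ and cancelling the multiple of $n$ yields the identity I would build everything on,
\begin{equation}
B(i+d) - B(i) \equiv q_1 + d \pmod{n}.
\label{eq:bankdiff}
\end{equation}
A hazard occurs exactly when $q_1 + d \equiv 0 \pmod n$, so it remains to rule this out for each $d = 2^t$.

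The finish exploits that $d$ and $n$ are both powers of two, in two cases. For a \emph{small} shift $2^t < n$ we have $r+d < 2n$, so $q_1 \in \{0,1\}$: the case $q_1 = 0$ would need $2^t \equiv 0 \pmod n$, impossible for $0 < d < n$, while $q_1 = 1$ would need $2^t = n-1$, which fails because $n-1 = 2^s-1$ is odd and at least $3$ (here $s \ge 2$ enters) and so is not a power of two. For a \emph{large} shift $2^t \ge n$, we have $d = 2^{t-s}n \equiv 0 \pmod n$ and $q_1 = \lfloor (r+2^{t-s}n)/n\rfloor = 2^{t-s}$ with $0 \le t-s \le s-1$, so $q_1 \in \{1,2,\dots,2^{s-1}\}$ lies strictly between $0$ and $n$ and cannot vanish modulo $n$. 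In either case the right-hand side of~\eqref{eq:bankdiff} is nonzero, giving $B(i) \neq B(i+2^t)$.

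The delicate point --- and the reason a one-line argument does not work --- is the carry bookkeeping: adding $2^t$ can push $r$ across a bank boundary, simultaneously bumping the address $q$ and wrapping the offset, so the naive guess that the bank simply shifts by $d$ is wrong. Identity~\eqref{eq:bankdiff} is what tames this, folding both effects into the single quantity $q_1 + d \bmod n$; after that the problem is purely dyadic. I expect the genuinely load-bearing case to be the small-shift $q_1 = 1$ sub-case, as it is the only place the hypothesis $n>2$ is used: for $n = 2$ one would have $n-1 = 1 = 2^0$, realizing exactly the collision the theorem must exclude.
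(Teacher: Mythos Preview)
Your proposal is correct and follows essentially the same route as the paper: both split on whether $2^t \ge n$ or $2^t < n$, show the bank index shifts by $2^{t-k/2}$ in the large case and by either $2^t$ or $2^t+1$ in the small (carry/no-carry) case, and then check neither shift can vanish modulo $n$. Your identity $B(i+d)-B(i)\equiv q_1+d\pmod n$ is a tidier packaging of the same carry bookkeeping the paper handles case-by-case, and your observation that the $q_1=1$ sub-case is exactly where $n>2$ is needed matches the paper's use of that hypothesis.
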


\begin{proof}
    Let $N = 2^k$, then $n = 2^{k/2}$. Consider the banks for $a[i]$ and $a[i + 2^t]$, where $0 \leq t \leq k - 1$.

    $Bank[i] = (i \bmod n + \left\lfloor \frac{i}{n} \right\rfloor) \bmod n)$

    $Bank[i + 2^t] = ((i + 2^t) \bmod n + \left\lfloor \frac{(i + 2^t)}{n} \right\rfloor) \bmod n)$

    \textbf{Case 1:} If $2^t >= n$, then $2^t \bmod n = 0$ and $2^t / n = 2^{t - k/2}$
    \begin{align*}
        Bank[i + 2^t] &= ((i + 2^t) \bmod n + \left\lfloor \frac{(i + 2^t)}{n} \right\rfloor) \bmod n\\
        &= (i \bmod n + \left\lfloor \frac{i}{n} \right\rfloor + 2^{t - k/2}) \bmod n\\
        &= (Bank[i] + 2^{t - k/2}) \bmod n
    \end{align*}
    
    As $1 \leq 2^{t - k/2} < n$, $Bank[i + 2^t] \neq Bank[i]$.

    \textbf{Case 2:} If $2^t < n$, i.e., $1 \leq 2^t \leq n / 2$, then $2^t \bmod n = 2^t$
    \begin{align*}
        &Bank[i + 2^t] = ((i + 2^t) \bmod n + \left\lfloor \frac{(i + 2^t)}{n} \right\rfloor) \bmod n\\
        &= (i \bmod n + 2^t + \left\lfloor \frac{i}{n} + \frac{1}{2^{k/2 - t}} \right\rfloor) \bmod n\\
        &= \begin{cases}
            (Bank[i] + 2^{t}) \bmod n, &\text{if} \left\lfloor \frac{i}{n} + \frac{1}{2^{k/2 - t}} \right\rfloor = \left\lfloor \frac{(i + 2^t)}{n} \right\rfloor\\
            (Bank[i] + 2^{t} + 1) \bmod n, &\text{otherwise}
        \end{cases}
    \end{align*}

    As $1 \leq 2^t \leq n / 2$ and $n > 2$, $Bank[i + 2^t] \neq Bank[i]$.

    Hence, we have shown that $a[i]$ and $a[i + 2^t]$ cannot reside in the same bank, for $1 \leq 2^t \leq N/2$.
\end{proof}

Similarly, we can prove that $a[i]$ is in a different bank with all coefficients in $\{a[i - 1], a[i - 2], a[i - 4], ..., a[i - 2^t], ..., a[i - N/2]\}$. As a result, we have proved that our data layout can eliminate all the hazards introduced by the spatial dependency.

\begin{figure*}[t]
    \centering
    \includegraphics[width=1\textwidth]{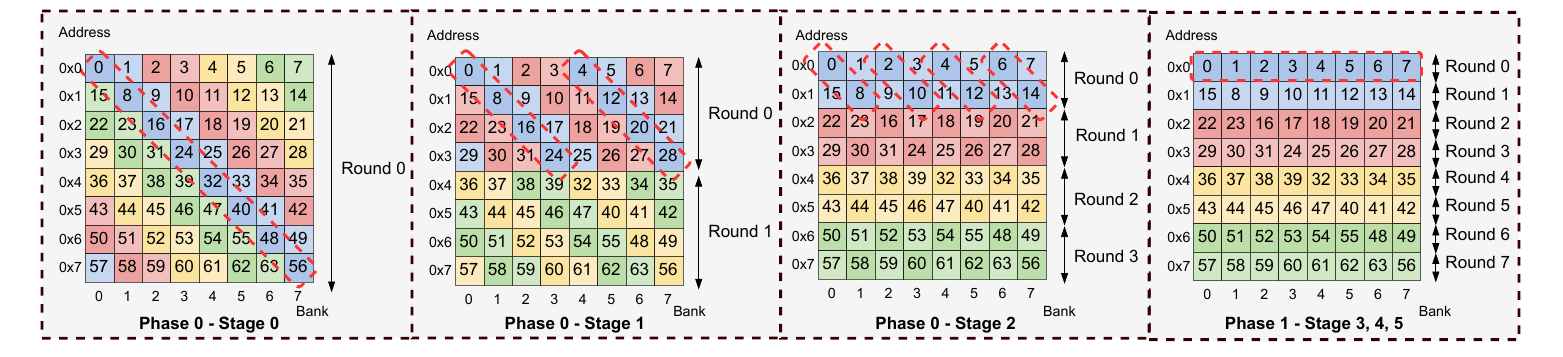}
    \caption{Dataflow Scheduling Strategy for $N$ = 64, $N_{pe}$ = 4}
    \label{fig:Proposed Dataflow}
\end{figure*}

\subsection{NTT/INTT Dataflow Scheduling}

The temporal dependencies between two adjacent stages may lead to RAW hazards, leading to performance-degrading stalls. In this section, we present our scheduling strategy for NTT and INTT to eliminate all RAW hazards. We denote the number of PEs as $N_{pe}$. To optimize the utilization of PEs is limited by the available memory banks, with $N_{pe} = 1, 2, 4, \ldots, n/2$. The likelihood of encountering RAW hazards increases with the number of PEs, due to their rapid data fetching before the data is computed in the previous stage. As a result, we will explain our dataflow scheduling strategy with $N_{pe} = n/2$.

The example of our dataflow scheduling strategy for $N = 64$ and $N_{pe} = 4$ is shown in Figure~\ref{fig:Proposed Dataflow}. We divide the computation processes for NTT and INTT into two distinct phases, each comprising $(\log N)/2$ stages. 
\begin{enumerate}
    \item In phase $0$, there are $(\log N)/2$ stages, with all data read diagonally. In Stage $s$, there are $2^s$ rounds, with each round requiring $n$ clock cycles to read all necessary data assuming $N_{pe} = n/2$. 
    The address range for the $r^{th}$ round of Stage $s$ is $[r\frac{N}{2^s}, (r + 1)\frac{N}{2^s})$.
    \item In phase $1$, there are also $(\log N)/2$ stages, where data is read sequentially in a row-by-row manner at each stage.
\end{enumerate}
After computing in the PE, all data are written back to their original locations. 

The comparison of our dataflow scheduling strategy with different computation speeds of PEs is shown in Figure~\ref{fig:Read_after_Write_Hazard}. As shown in the top part, when the computation speed of PE is slow, many RAW hazards can occur in this dataflow. For example, the computation result of $a[4]$ by PE0 in cycle 4 of the first stage of Phase 0 is needed in cycle 8, which is the first cycle of the second stage. Hence, the total time for PE computation and read/write delays to memory must be limited to 4 cycles. To prevent RAW stalls between Stage 0 and Stage 1, the computation delay and $n$ must satisfy the condition $delay_{compute} + delay_{io} < n/2$. In general, the requirement can be expressed as $delay_{compute} + delay_{io} < n(1 - 1/2^{k+1})$ to prevent any the RAW stalls between Stage $k$ and Stage $k+1$ in Phase 0. In Phase 1, as the data is read row-by-row, the requirement becomes $n > delay_{compute} + delay_{io}$. Therefore, the strictest condition to eliminate all RAW hazards is expressed in Equation~\ref{eq:RAW_constraint1}.

\begin{equation}
delay_{pe} + delay_{write} + delay_{read} < \frac{n}{2}
\label{eq:RAW_constraint1}
\end{equation}
When considering the case $N_{pe} < n/2$, each diagonal takes $\frac{n}{2N_{pe}}$ cycles. Then, the requirement becomes:
\begin{equation}
delay_{pe} + delay_{write} + delay_{read} < \frac{n}{2} \times \frac{n}{2N_{pe}}
\label{eq:RAW_constraint2}
\end{equation}
In practical applications of FHE, the degree $N$ is commonly set to values at least $2^{10}$ ($n \geq 32$). Based on Equation~\ref{eq:RAW_constraint1}, as long as the total time of PE computation and read/write delays is less than $16$ cycles, as is the case in our proposed design which will be introduced in the subsequent section, there would be no stalls due to RAW hazards.


\begin{figure}[h]
    \centering
    \includegraphics[width=1\columnwidth]{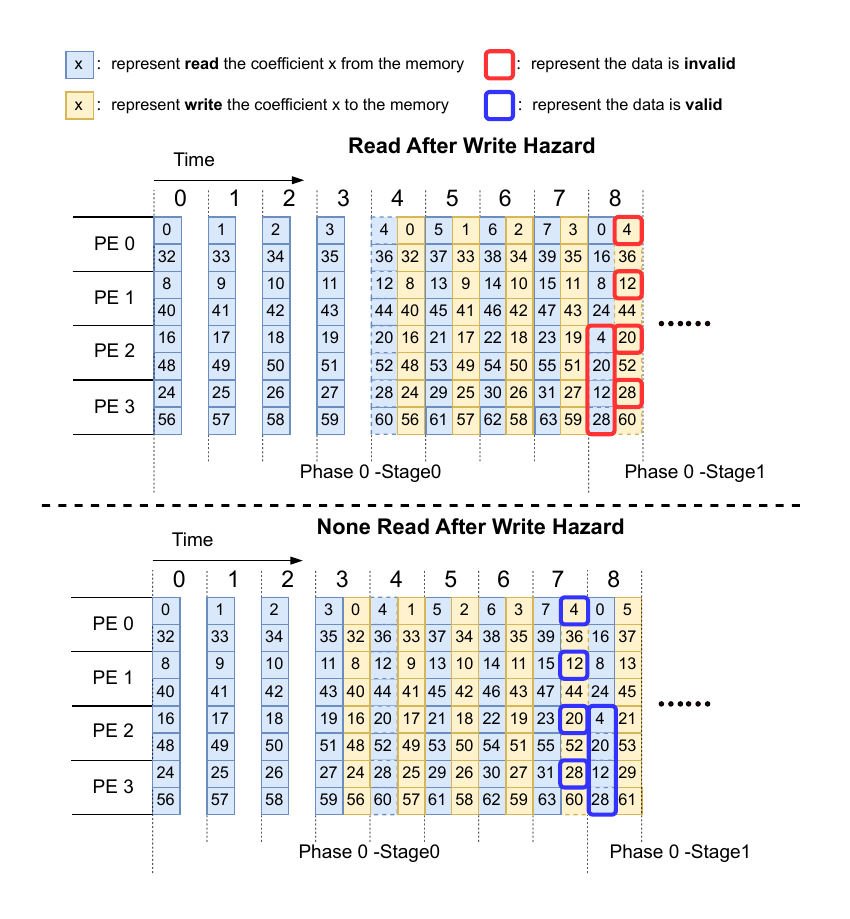}
    \caption{Example of Read After Write Hazard}
    \label{fig:Read_after_Write_Hazard}
\end{figure}
For INTT based on the GS algorithm, we only need to perform the calculation operations in the reverse order.

\section{Overall Microarchitecture}
\label{sec:arch}
\begin{figure*}[ht]
    \centering
    \includegraphics[width=0.8\linewidth]{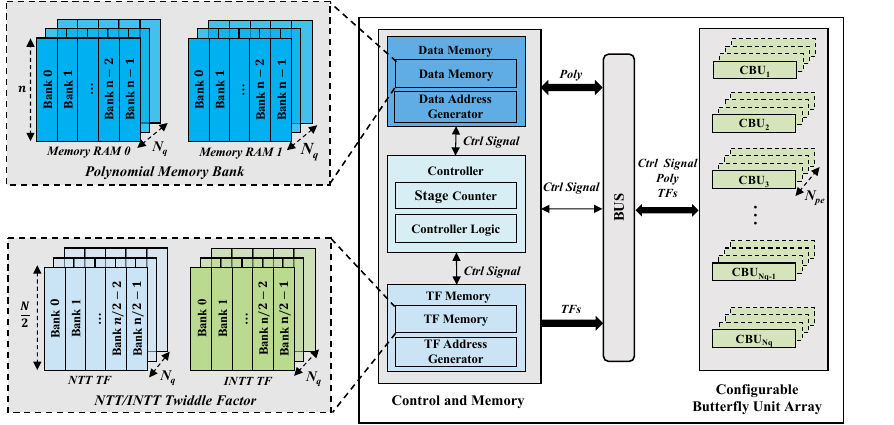}
    \caption{Overall Architecture}
    \label{Overall Architecture}
\end{figure*}
\subsection{Overall Architecture}
In the FHE scheme, a ciphertext consists of two sets of polynomials, each with $N$ coefficients adhering to modulus $Q$, where $N$ is a power of two. RNS decomposition is applied to decompose the polynomial into $N_\text{q}$ distinct polynomials with $N$ degrees, each associated with a specific modulus $q_i$, where $i$ ranges from $1$ to $N_\text{q}$.

The overall architecture is illustrated in Fig.~\ref{Overall Architecture}, consisting primarily of four essential modules: a controller module, a memory module, a computation module, and a bus module for communication. 

The controller module operates as a state machine, producing control signals in response to input stimuli. These signals manage data memory and twiddle factor memory by generating suitable read and write addresses. Furthermore, the control logic produces distinct control signals depending on the current system state, allowing the configurable butterfly unit to establish diverse data paths and perform various computational tasks. 

The data memory module consists of an address generation module and $2 \times N_\text{q}$ dual-ports block RAM arrays, with each RAM array containing $n$ memory banks of depth $n$ that stores $N$ coefficients. The address generation module generates data read and write addresses based on signals from the control logic. For the number theoretic transform, it needs completely different twiddle factors for NTT and INTT operation because CT and GS algorithms, which requires the twiddle factor memory to store twiddle factors for all modulus $q_i$, totaling $n/2 \times N \times N_\text{q}$ twiddle factors. A single twiddle memory RAM comprises $n/2$ memory banks, each with a depth of $N/2$, utilized to store the entire twiddle factors for NTT or INTT. Once written, these factors remain unchanged, and subsequent stages read them according to control signals. The bus ensures the transfer of control signals and data among various modules.

The computational module consists of $N_q $ configurable butterfly unit arrays, with each modulus $q_i$ associated with a CBU array. Each CBU array contains $N_\text{pe}$ configurable butterfly units, providing flexibility for configuring diverse data paths and adaptability for tasks such as NTT, modular multiplication, and INTT. A detailed explanation of these configurable processing units will be discussed in the following section.

\subsection{Configurable Butterfly Unit}
In polynomial multiplication based on NTT, our approach involves the integration of NTT, modular multiplication (MultMod), and INTT operations within a streamlined and compact CBU. We integrate the CT algorithm for NTT and GS algorithm for INTT operations, as shown in Fig.~\ref{fig:Configurable Butterfly Unit}(1). Unlike previous approaches with separate implementations for each operation, our CBU allows for an adaptable configuration of data paths for these operations, providing a more versatile solution. Illustrated in Fig.~\ref{fig:Configurable Butterfly Unit}(4) to Fig.~\ref{fig:Configurable Butterfly Unit}(6), three different modes of operations are supported through the MUX path adjusted by the control signal.

In the NTT mode, as depicted in Fig.~\ref{fig:Configurable Butterfly Unit}(4), the CBU is set up to execute the radix-2 CT algorithm. This configuration requires two integer inputs $a$ and $b$, a modulus $q$, and a twiddle factor $\omega$, which varies according to the specific stage and cycle of the computation. The outputs from the CBU in this mode are given by the expressions $a+b\omega$ and $a-b\omega$. In the INTT mode, shown in Fig.~\ref{fig:Configurable Butterfly Unit}(5), the CBU is reconfigured to support the radix-2 GS algorithm. This mode also takes two integers $a$ and $b$, an inverse twiddle factor $\omega$, and the modulus $q$. The outputs are respectively $(a+b)/2$ and $(a-b)/2$ The third configuration is the MultMod mode, illustrated in Fig.~\ref{fig:Configurable Butterfly Unit}(6), where the CBU is performing modular multiplication operations using the Barrett Multiplication algorithm. In this mode, the inputs are two integers, $a$ and $b$, together with the modulus $q$. The output is the result of the modular multiplication $(a \cdot b) \mod q$.

\begin{figure*}[h]
    \centering
    \includegraphics[width=0.98\textwidth]{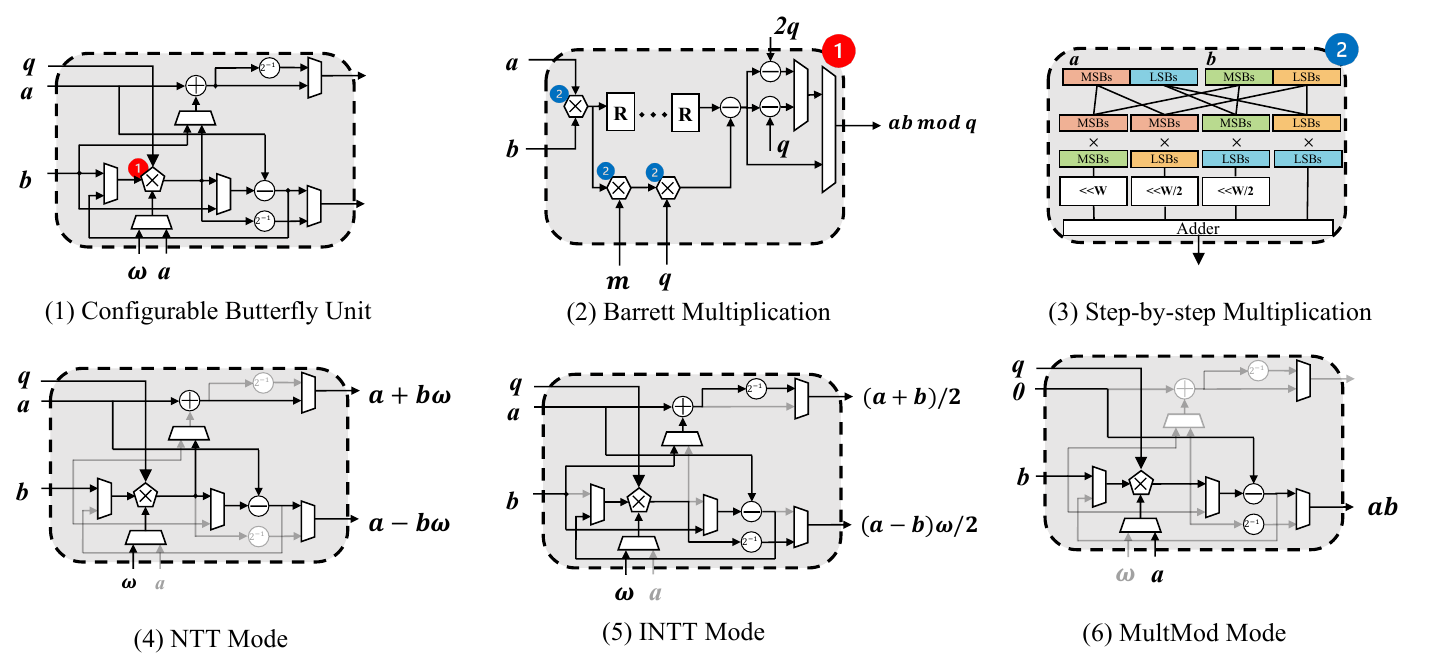}
    \caption{(1) Configurable Butterfly Unit (The detailed structure of the multiplier \ding{182} is in (2)). (2) Barrett Multiplication (The detailed structure of multipler \ding{183} is in (3)). (3) Step-by-step Multiplication. (4)-(6) CBU in different modes. }
    \label{fig:Configurable Butterfly Unit}
\end{figure*}

This unified data flow control consolidates control mechanisms for three modes and allows for efficient implementation of arithmetic operations required in the FHE scheme, thereby reducing the overall hardware complexity and improving resource utilization.

\subsection{Efficient Modular Multiplication}
\label{subsec:multiplication}
Among the supported operations, including addition modulo, subtraction modulo, and modular multiplication, the last one requires special attention due to its higher resource consumption and optimization necessity for achieving acceptable resource utilization and performance. Two prevalent modular multiplication algorithms, Montgomery and Barrett, were examined in terms of hardware resource utilization and data bit width. Results show that the Montgomery algorithm provides superior resource efficiency with very large bits, but requiring data preprocessing and extensive processing time. In contrast, the Barrett algorithm has advantages with relatively small bit widths and can leverage Digital Signal Processors (DSPs) to significantly enhance the performance. 

Given that the coefficient width is maintained below 32 bits following RNS decomposition, the Barrett algorithm was chosen for performing modular multiplication within the CBU, with its hardware-specific implementation depicted in Algorithm~\ref{alg:HardwareBarrettModMul}. The overall architectural design of this implementation within the CBU is illustrated in Fig.~\ref{fig:Configurable Butterfly Unit}(2).

\begin{algorithm}[ht]
\caption{Hardware-friendly Barrett Modular Multiplication}
\label{alg:HardwareBarrettModMul}
\KwIn{$a, b, q \in \mathbb{Z}^+$, $a, b < q$}
\KwOut{$z = (a \cdot b) \pmod{q}$}

\BlankLine
$k \gets \lceil \log_2(\textit{q}) \rceil$\;
\BlankLine
$m \gets \lfloor 2^{2k} / \textit{q} \rfloor$\;

\BlankLine
\textit{$t_1$} $\gets$ \textbf{StepMultiplier}(\textit{a}, \textit{b})\;

\BlankLine
\textit{$t_1^{high}$} $\gets$ \textit{$t_1$} $>> (k - 1)$\;

\BlankLine
\textit{$t_2$} $\gets$ \textbf{StepMultiplier}(\textit{$t_1^{high}$}, $m$) $>> (k + 1)$\;

\BlankLine
\textit{$t_3$} $\gets$ \textbf{StepMultiplier}(\textit{$t_2$}, \textit{q})\;

\textit{$t_4$} $\gets$ \textit{$t_1$} - \textit{$t_3$}\;

\If{$t_4 >= 2q$}{
    $z \gets t_4 - 2q$\;
}\ElseIf{$t_4 >= q$}{
    $z \gets t_4 - q$\;
}\Else{
    $z \gets t_4$\;
}

\Return{$z$}\;
\end{algorithm}
In order to prove the correctness of Algorithm~\ref{alg:HardwareBarrettModMul}, we first examine the relationship between $t_2$ in Algorithm~\ref{alg:SoftwareBarrettModMul} and $t_2$ in Algorithm~\ref{alg:HardwareBarrettModMul}, followed by a proof of the equivalence of these two algorithms.

\begin{lemma}
    $m$ in Algorithm~\ref{alg:HardwareBarrettModMul} is $k + 1$ bits.
\end{lemma}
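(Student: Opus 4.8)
The plan is to translate the bit-width claim into a pair of power-of-two inequalities and then verify each one directly from the definition of $k$. Saying that $m$ is a $(k+1)$-bit number is precisely the statement that $2^k \leq m < 2^{k+1}$, so the whole lemma reduces to bounding $m = \lfloor 2^{2k}/q \rfloor$ between these two consecutive powers of two. I would state this equivalence up front, since it is the conceptual pivot of the proof.

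First I would extract the bounds on $q$ implied by $k = \lceil \log_2 q \rceil$. By definition of the ceiling, $k - 1 < \log_2 q \leq k$, which is equivalent to
\begin{equation*}
2^{k-1} < q \leq 2^k.
\end{equation*}
These two inequalities are exactly what drive the two halves of the bound on $m$, so I would establish them before touching the floor expression.

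Next I would handle the upper bound. From $q > 2^{k-1}$ we get $m \leq 2^{2k}/q < 2^{2k}/2^{k-1} = 2^{k+1}$, and dropping the floor can only decrease the value, so $m < 2^{k+1}$. For the lower bound I would use $q \leq 2^k$, which gives $2^{2k}/q \geq 2^{2k}/2^k = 2^k$; since $2^k$ is an integer and the floor of any real number at least $2^k$ is itself at least $2^k$, I conclude $m = \lfloor 2^{2k}/q \rfloor \geq 2^k$. Combining the two bounds yields $2^k \leq m < 2^{k+1}$, which is the claim.

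The only subtlety, and the place I would be most careful, is the interaction between the floor function and the endpoint $q = 2^k$: there the quotient $2^{2k}/q$ equals $2^k$ exactly, and I must confirm that $2^k$ still counts as a $(k+1)$-bit number (its binary form is a one followed by $k$ zeros). Since in the NTT setting $q$ is an odd prime strictly below $2^k$, this edge case never actually arises, but I would note that the argument above already covers it so the lemma holds for general positive $q$ as well. There is no genuine obstacle here beyond making the floor-versus-integer comparison rigorous.
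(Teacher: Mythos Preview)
Your proposal is correct and follows essentially the same route as the paper: both extract $2^{k-1} < q \leq 2^k$ from $k = \lceil \log_2 q \rceil$ and then sandwich $m = \lfloor 2^{2k}/q \rfloor$ between $2^k$ and $2^{k+1}$. Your version is simply more explicit about the floor function and the $q = 2^k$ edge case, whereas the paper compresses the same chain of inequalities into a single display.
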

\begin{proof}
\begin{equation*}
\begin{aligned}
    k = \lceil \log_2(\textit{q}) \rceil
    \implies & k - 1 < \log_2(\textit{q}) \leq k \\
    \implies & 2^{k-1} < q \leq 2^k \\
    \implies & 2^k \leq m = \lfloor 2^{2k} / \textit{q} \rfloor < 2^{k+1}
\end{aligned}
\end{equation*}
\end{proof}

\begin{theorem}
\label{thm:t2}
The value of $t_2$ in Algorithm~\ref{alg:HardwareBarrettModMul} is either equal to or one less than the value of $t_2$ in Algorithm~\ref{alg:SoftwareBarrettModMul}.
\begin{equation*}
    \begin{aligned}
    &t_2^{alg3} = t_2^{alg2} - 1, \text{or}\\
    &t_2^{alg3} = t_2^{alg2}
\end{aligned}
\end{equation*}
\end{theorem}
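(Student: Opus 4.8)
The plan is to put both quantities over the common denominator $2^{2k}$, so that the only difference between them is a small correction term arising from the right-shift by $k-1$ in Algorithm~\ref{alg:HardwareBarrettModMul}. Writing $t_1 = a\cdot b$, the software version of line~4 computes $t_2^{alg2} = \lfloor t_1 m / 2^{2k}\rfloor$, whereas the hardware version first truncates $t_1$ to $t_1^{high} = \lfloor t_1 / 2^{k-1}\rfloor$ and then shifts the product $t_1^{high}\cdot m$ right by $k+1$, giving $t_2^{alg3} = \lfloor t_1^{high} m / 2^{k+1}\rfloor$. Comparing these two floor expressions directly is awkward because they have different denominators and nested truncations, so the first move is to eliminate the inner floor.

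First I would decompose the truncation explicitly: write $t_1 = 2^{k-1}\, t_1^{high} + r$ with $0 \le r < 2^{k-1}$, i.e. $t_1^{high} = (t_1 - r)/2^{k-1}$. Substituting this into the hardware expression and merging the two shifts collapses the denominator to $2^{2k}$, yielding $t_2^{alg3} = \lfloor (t_1 m - r m)/2^{2k}\rfloor$. This rewriting is the crux of the argument: it exposes $t_2^{alg3}$ as exactly the same floor as $t_2^{alg2}$, but with its numerator reduced by the nonnegative quantity $rm$. From $rm \ge 0$ and monotonicity of the floor, the upper bound $t_2^{alg3} \le t_2^{alg2}$ is then immediate.

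For the lower bound I would apply the standard inequality $\lfloor x\rfloor - \lfloor x - y\rfloor \le \lceil y\rceil$ valid for $y \ge 0$, with $x = t_1 m/2^{2k}$ and $y = rm/2^{2k}$, giving $t_2^{alg2} - t_2^{alg3} \le \lceil rm/2^{2k}\rceil$. The main obstacle, and the place where the specific shift amounts matter, is bounding the perturbation $rm$. Here I would invoke the preceding Lemma, which guarantees $m < 2^{k+1}$, together with $r < 2^{k-1}$, to conclude $rm < 2^{k-1}\cdot 2^{k+1} = 2^{2k}$; hence $rm/2^{2k} < 1$ and $\lceil rm/2^{2k}\rceil \le 1$.

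Combining the two bounds yields $t_2^{alg2} - 1 \le t_2^{alg3} \le t_2^{alg2}$, which is precisely the claim. The delicate point to double-check is that the shifts $k-1$ and $k+1$ are chosen exactly so that $rm < 2^{2k}$ holds: a coarser truncation of $t_1$ would discard too many low bits and permit the difference to exceed one, so the constants in Algorithm~\ref{alg:HardwareBarrettModMul} are essential to keeping the error within a single unit, which in turn is what justifies the extra $t_4 \ge 2q$ branch in the final reduction.
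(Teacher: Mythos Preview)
Your proof is correct and follows essentially the same approach as the paper: both decompose $t_1$ into its high part and a remainder of fewer than $k-1$ bits (your $r$ is the paper's $t_1^{low}$), rewrite $t_2^{alg3}$ over the common denominator $2^{2k}$, and then use the Lemma's bound $m < 2^{k+1}$ together with $r < 2^{k-1}$ to show the discarded term $rm$ is strictly less than $2^{2k}$, forcing the floor to drop by at most one. Your presentation is a bit more explicit about the floor/ceiling inequality, but the underlying argument is identical.
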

\begin{proof}
The computations of $t_1$ in Algorithm~\ref{alg:SoftwareBarrettModMul} and Algorithm~\ref{alg:HardwareBarrettModMul} are exactly the same. The difference is in the computation of $t_2$. While Algorithm~\ref{alg:SoftwareBarrettModMul} computes $a \cdot b \cdot m$ and then right shifts the result by $2k$ bits, Algorithm~\ref{alg:HardwareBarrettModMul} first right shifts the result of $a \cdot b$ by ($k - 1$) bits, subsequently multiplies it by $m$, and finally right shifts the result by $(k + 1)$ bits. Let us split $t_1$ into high and low parts, with the low part containing ($k - 1$) bits.
\begin{equation*}
    \begin{aligned}
    t_2^{alg2} &= (t_1^{high} \cdot 2^{k-1} + t_1^{low}) \cdot m >> (2k)\\
    &= (t_1^{high} \cdot m \cdot 2^{k-1} + t_1^{low} \cdot m) >> (2k)\\
\end{aligned}
\end{equation*}
As $m$ is $k + 1$ bits and $t_1^{low}$ is $k - 1$ bits, the right part $t_1^{low} \cdot m < 2^{2k}$. Therefore,
\begin{equation*}
    \begin{aligned}
  t_2^{alg2} &\leq ~((t_1^{high} \cdot m) >> (k+1)) + 1 = t_2^{alg3} + 1 \\
  t_2^{alg2} &\geq ~((t_1^{high} \cdot m) >> (k+1)) = t_2^{alg3}
\end{aligned}
\end{equation*}
\end{proof}

\begin{theorem}
\label{thm:t4}
The value of $t_4$ in Algorithm~\ref{alg:HardwareBarrettModMul} is either equal to or greater than the value of $t_4$ in Algorithm~\ref{alg:SoftwareBarrettModMul} by a quantity of q.
\begin{equation*}
    t_4^{alg3} = t_4^{alg2},~~~~ \textit{or}~~~~ t_4^{alg3} = t_4^{alg2} + q
\end{equation*}
\end{theorem}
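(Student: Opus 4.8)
The plan is to obtain this statement almost immediately as a corollary of Theorem~\ref{thm:t2}, since the only place where Algorithm~\ref{alg:HardwareBarrettModMul} and Algorithm~\ref{alg:SoftwareBarrettModMul} diverge is in the computation of $t_2$. First I would observe that $t_1$ is computed identically in both algorithms---\textbf{StepMultiplier} merely realizes the ordinary product $a \cdot b$---so the $t_1$ term in $t_4 = t_1 - t_3$ contributes nothing to any discrepancy. The entire difference between $t_4^{alg3}$ and $t_4^{alg2}$ must therefore be carried through the chain $t_2 \mapsto t_3 \mapsto t_4$.

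Next I would express $t_3$ in terms of $t_2$ in each algorithm. Since both set $t_3 = t_2 \cdot q$, and Theorem~\ref{thm:t2} guarantees $t_2^{alg3} \in \{t_2^{alg2}, t_2^{alg2} - 1\}$, I would split into the two cases. In the case $t_2^{alg3} = t_2^{alg2}$, the values of $t_3$ coincide and hence $t_4^{alg3} = t_4^{alg2}$. In the case $t_2^{alg3} = t_2^{alg2} - 1$, we get $t_3^{alg3} = t_3^{alg2} - q$, so subtracting from the common $t_1$ yields $t_4^{alg3} = t_4^{alg2} + q$. Combining the two cases gives exactly the claimed dichotomy.

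Because the substance of the argument is entirely absorbed into Theorem~\ref{thm:t2}, there is no genuine obstacle here; the proof is a short case analysis. The only point requiring care is the sign bookkeeping: the ``$-1$'' on $t_2$ becomes a ``$-q$'' on $t_3$, which---after the subtraction $t_1 - t_3$---flips to a ``$+q$'' on $t_4$. I would verify this orientation explicitly so that the offset appears as $t_4^{alg3} = t_4^{alg2} + q$, with the hardware value being the larger of the two, since this is precisely the direction that the subsequent two-stage correction in Algorithm~\ref{alg:HardwareBarrettModMul} (comparing against $2q$ and then $q$) is designed to absorb.
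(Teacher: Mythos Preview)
Your proposal is correct and follows essentially the same approach as the paper: both arguments write $t_4 = t_1 - t_2 \cdot q$ in each algorithm, note that $t_1$ is common, and then invoke Theorem~\ref{thm:t2} to conclude that the two possible values of $t_2^{alg3}$ yield exactly $t_4^{alg3} = t_4^{alg2}$ or $t_4^{alg3} = t_4^{alg2} + q$. Your explicit sign bookkeeping is a helpful elaboration of what the paper leaves implicit.
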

\begin{proof}
\begin{equation*}
\begin{aligned}
    t_4^{alg2} &= t_1 - t_2^{alg2} \cdot q \\
    t_4^{alg3} &= t_1 - t_2^{alg3} \cdot q
\end{aligned}
\end{equation*}
Based on Theorem~\ref{thm:t2}, $t_4^{alg3} = t_4^{alg2}$, or $t_4^{alg3} = t_4^{alg2} + q$.
\end{proof}

The equivalence of Algorithm~\ref{alg:HardwareBarrettModMul} and Algorithm~\ref{alg:SoftwareBarrettModMul} can be easily demonstrated with the use of Theorem~\ref{thm:t4}.

Algorithm~\ref{alg:HardwareBarrettModMul} demonstrates greater hardware-friendliness compared to Algorithm~\ref{alg:SoftwareBarrettModMul} due to the right shift of the product of $a$ and $b$ before multiplication with $m$. In hardware implementation, optimizing multiplications is essential to align with the structural capabilities of FPGAs and the FHE scheme set. For the Virtex-7 family, the relevant DSP slice is the DSP48E1. In FHE algorithms, the coefficients typically exceed 30 bits, which cannot be directly handled by DSP48E1. To efficiently handle large bit-width data, it is necessary to partition the data into smaller segments, which involves dividing the data into two sections: the Most Significant Bits (MSBs) and the Least Significant Bits (LSBs), designated as $MSBs=[W:W/2]$ and $LSBs=[W/2-1:0]$ respectively, where $W$ represents the total data width. This segmentation is graphically represented in Fig.~\ref{fig:Configurable Butterfly Unit}(6). After the multiplication, the result can be recovered through shift and concatenation operations. It can potentially improve frequency using DSP and reduce the cascade of Look-Up Tables (LUTs) and Flip-Flops (FFs). In the context of small parameter designs, such as $log_2{q} = 14$, these multiplications can be efficiently processed by DSP without segmentation.

\subsection{Efficient Half Modular Operation} By employing the GS algorithm, we perform multiplication of the result by $1/2$ using a streamlined hardware configuration consisting of a shift register and an adder. This refined substitution eliminates the necessity for post-computation multiplication by $N^{-1}$, resulting in reduced bit-width operations and decreased hardware overhead throughout the computational process.

The operation $x/2 \mod q$ can be efficiently computed using Theorem~\ref{define:x/2modq}. 

\begin{theorem}\label{define:x/2modq}
For a given integer $x$ and modulus $q$, the expression $x/2 \mod q$ can be mathematically represented as:
\begin{equation}
x/2 \bmod q= 
\left\{\begin{matrix}
 (x>>1) \bmod q  & x=2k,k\in\mathbb{Z} \\
  \left\lfloor\frac{x}{2}\right\rfloor+\frac{q+1}{2}(\bmod q) & x=2k+1, k\in \mathbb{Z} 
\end{matrix}\right.
\end{equation}
\vspace{0.15in}
\end{theorem}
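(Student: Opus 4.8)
The plan is to read the notation $x/2 \bmod q$ as modular division, i.e., as the product $x \cdot 2^{-1} \bmod q$ where $2^{-1}$ denotes the multiplicative inverse of $2$ in $\mathbb{Z}_q$. Under this reading, the entire statement reduces to pinning down a closed form for that inverse and then performing a case split on the parity of $x$.

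First I would record that $q$ is odd. Since $q$ is prime and satisfies $q \equiv 1 \pmod{2N}$ with $N \geq 1$, the modulus $q$ is odd, so $\gcd(2, q) = 1$ and $2^{-1}$ exists in $\mathbb{Z}_q$. The key identity is $2^{-1} \equiv (q+1)/2 \pmod q$, which I would verify directly by checking $2 \cdot (q+1)/2 = q + 1 \equiv 1 \pmod q$; note that $(q+1)/2$ is an integer precisely because $q$ is odd.

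Next I would split on parity. In the even case $x = 2k$, the hardware right shift gives $x \gg 1 = k$, and $x \cdot 2^{-1} \equiv 2k \cdot 2^{-1} \equiv k \pmod q$, so indeed $x/2 \bmod q = (x \gg 1) \bmod q$. In the odd case $x = 2k+1$, I have $\lfloor x/2 \rfloor = k$ together with $x \cdot 2^{-1} \equiv (2k+1)\cdot 2^{-1} \equiv k + 2^{-1} \equiv \lfloor x/2 \rfloor + (q+1)/2 \pmod q$, which matches the claimed second branch.

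The argument involves no genuine obstacle; it is a direct parity case analysis once the inverse $2^{-1} = (q+1)/2$ has been identified. The only points demanding care are the correspondence between the bit-level right shift $x \gg 1$ and the arithmetic floor $\lfloor x/2 \rfloor$, and the well-definedness of the additive constant $(q+1)/2$ as an integer, both of which follow immediately from $q$ being odd.
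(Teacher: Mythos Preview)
Your argument is correct. You correctly interpret $x/2 \bmod q$ as $x \cdot 2^{-1} \bmod q$, pin down $2^{-1} \equiv (q+1)/2 \pmod q$ using the oddness of $q$, and then the parity split goes through exactly as you wrote.

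As for comparison: the paper actually states this theorem without proof. It is asserted as a computational shortcut for the INTT half-operation and then the text moves directly on to the experimental section. So there is no ``paper's own proof'' to compare against; your short derivation is precisely the standard justification one would expect, and it supplies what the paper omits.
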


\section{Experiment}
\label{sec:exp}
We implemented our NTT accelerator HF-NTT using Chisel and synthesized it to Verilog. By adjusting parameters such as the RNS decomposition modulus number $N_q$, modulus size $q_i$, polynomial length $N$, and number of CBU $N_{\text{PE}}$, we can generate the required hardware specifications.

To ensure a fair comparison for evaluating our accelerator, we utilized the Xilinx Vivado 2022.2 toolchain for synthesis and implementation. The target platform is the VIRTEX-7 xc7vx1140tflg1926 FPGA, which features 1,139,200 LUTs, 1,424,000 Flip-Flops, 3,360 DSPs, and 67,680 Kb of BRAM.

\subsection{Clock Cycles}

Our first experiment was conducted to investigate the relationship between the number of CBUs and the clock cycles needed for NTT, modular multiplication, and INTT operations using the generated accelerator with the data length $N=4096$. The results are summarized in Table~\ref{the relationship between cc and npe}. As we can see, the clock cycles for our accelerator is very close to the ideal case, since we do not have any stalls. The only difference is a small number of cycles for the initial setup.

\begin{table}[h]
\centering
\renewcommand\arraystretch{1.2}
\caption{The relationship between clock cycles and $N_{pe}$  $(N = 4096)$}
\label{the relationship between cc and npe}
\begin{tabular}{cccccc}
\toprule
\multirow{2}{*}{$N_{pe}$} & \multicolumn{4}{c}{Clock cycles (CC)}   \\
\cmidrule{2-6}
& Ideal NTT & NTT~\cite{9882331} & NTT & Mult & INTT  \\
\midrule
1  & 24,576 & 24,583 & 24,595 & 4,114 & 24,596 \\
2  & 12,288 & 12,295 & 12,307 & 2,066 & 12,308 \\
4  & 6,144 & 6,151 & 6,163  & 1,042 & 6,164  \\
8  & 3,072 & 3,079 & 3,091  & 530   & 3,092  \\
16 & 1,536 & 1,543 & 1,555  & 274   & 1,556  \\
32 & 768 & 775 & 787    & 146   & 788    \\
\bottomrule
\end{tabular}
\end{table}

Indeed, the relationships between the required clock cycles and the number of CBUs ($N_{pe}$) can be expressed in the following equation:

\begin{equation}
    CC_{\text{(I)NTT}} = \frac{N \log N }{2N_{\text{pe}}} + CC_{\text{butterfly}} + CC_{\text{read}} + CC_{\text{write}}
\end{equation}

where $CC_{\text{butterfly}}$ for INTT is one clock cycle more than that for NTT. This results in $CC_{\text{INTT}}$ being one clock cycle longer than $CC_{\text{NTT}}$ in the entire hazard-free pipeline.

Compared with the work presented in~\cite{9882331}, our implementation requires more clock cycles to complete the tasks. This is because we have optimized the pipeline by adding more registers, allowing us to achieve a higher operating frequency than that reported in~\cite{9882331}. Despite the increased number of clock cycles, the overall time consumes less due to the higher operating frequency.

\subsection{Performance and Area Evaluation}
We conducted an analysis of the relationship between performance and hardware resource overhead under the specific condition of $N_\text{q} =1, \lceil\log_2q\rceil=32$, and $N=4096$.  Both the NTT and polynomial multiplication operations exhibit similar trends in latency reduction. 

It is important to note that the number of BRAMs remains constant regardless of changes in the number of processing units, as their memory requirements are not directly affected by the number of CBUs but determined by polynomial degree $N$, modulus size $q_i$ and modulus number $N_q$. However, the utilization of other resources, such as LUTs, FFs, and DSPs, increases with the number of CBUs. 

The decrease in clock cycles is particularly noteworthy. The latency is reduced by 50\% as the number of CBUs is doubled, underscoring the significant efficiency gains achieved through increased parallelism. The execution time of NTT operations and the number of CBUs ($N_{pe}$). The execution time decreases significantly from $\mathbf{92.43 \mu s}$ to $\mathbf{2.74 \mu s}$ as we increase $N_{pe}$ from 1 to 32.

\begin{figure}[h]
    \centering
    \includegraphics[width=1\columnwidth]{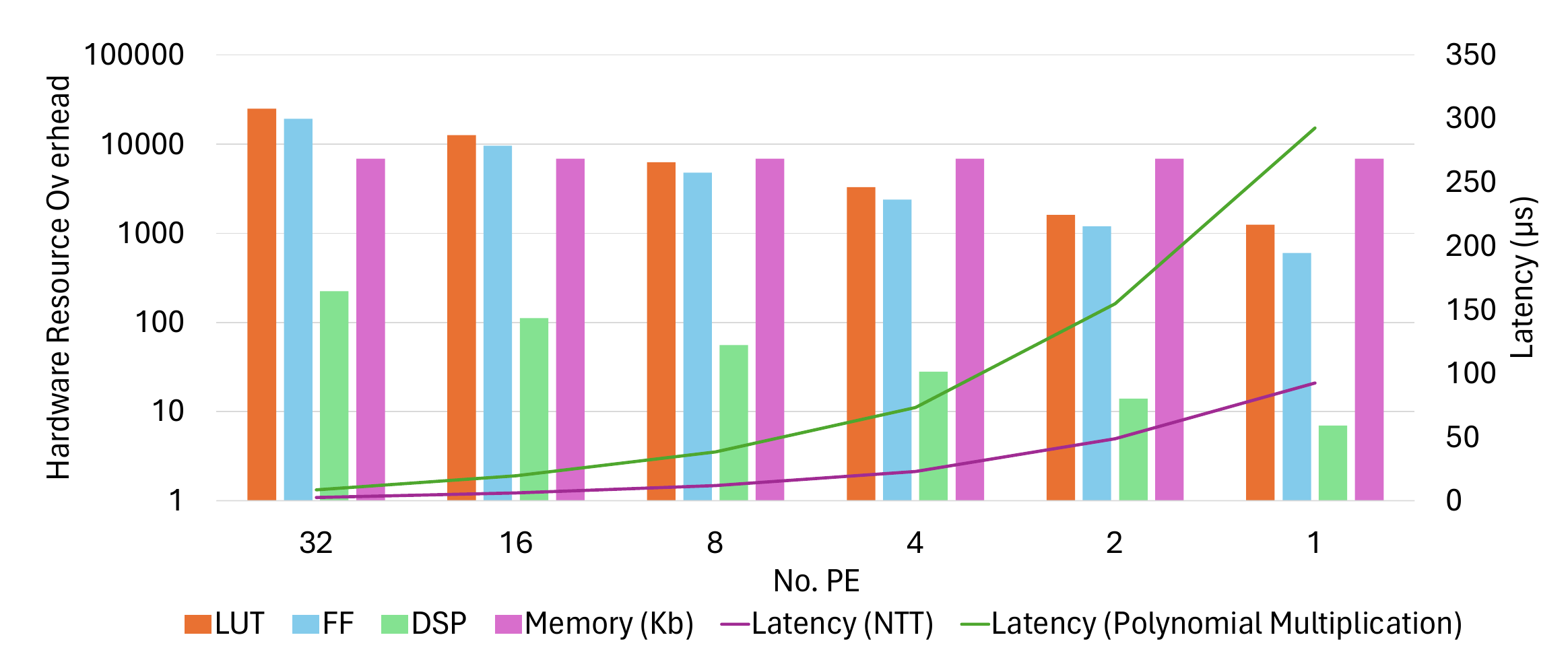}
    \caption{The relationship between performance, hardware resource and different CBU number $N_{pe}$  ($N_\text{q} =1, \lceil\log_2q\rceil=32$, and $N=4096$)}
    \label{fig:Latency and Resource}
\end{figure}

To evaluate our design efficiency, we employ the Area-Time Product (ATP) metric. The calculation involves summing LUT and FF resources, multiplied with the time consumed by the operation. Lower values of ATP indicate a better utilization of resources in achieving faster computational speeds.

The relationship between (LUT+FF)\_ATP and the CBU number $N_{\text{pe}}$ is illustrated in Figure~\ref{ATP}. When the polynomial degree $N = 1024$, the (LUT+FF)\_ATP does not change much as the PE number varies as the design is relatively simple for this small $N$. In all the other large cases, we observed a clear increasing trend as the number of CBUs decreases. This observation indicates that as parallelism increases, the benefits in execution time outweigh the increased resource overhead, resulting in the (LUT+FF)\_ATP decreases. It shows that our accelerator has superior efficiency for large-scale, highly parallel circuits, particularly in processing high-degree polynomials.



\begin{figure}[ht]
    \centering
    \begin{subfigure}[b]{0.49\columnwidth}
        \centering
        \includegraphics[width=\textwidth]{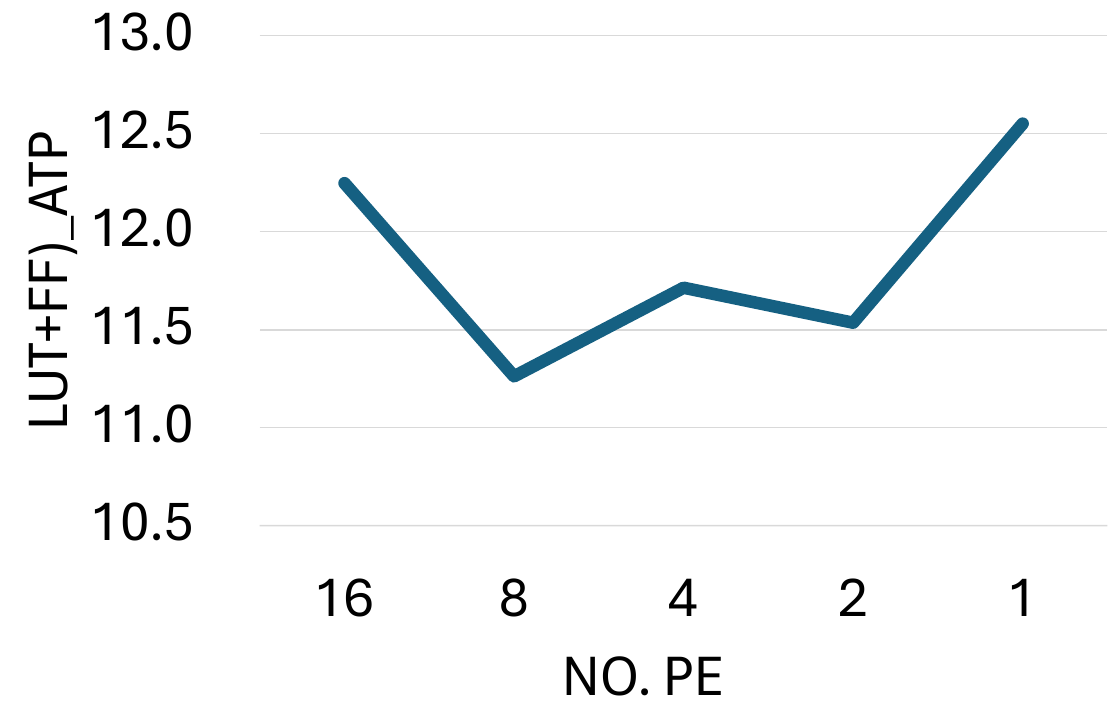}
        \caption{$N = 1024, \lceil \log_2q \rceil = 14$}
    \end{subfigure}
    \begin{subfigure}[b]{0.49\columnwidth}
        \centering
        \includegraphics[width=\textwidth]{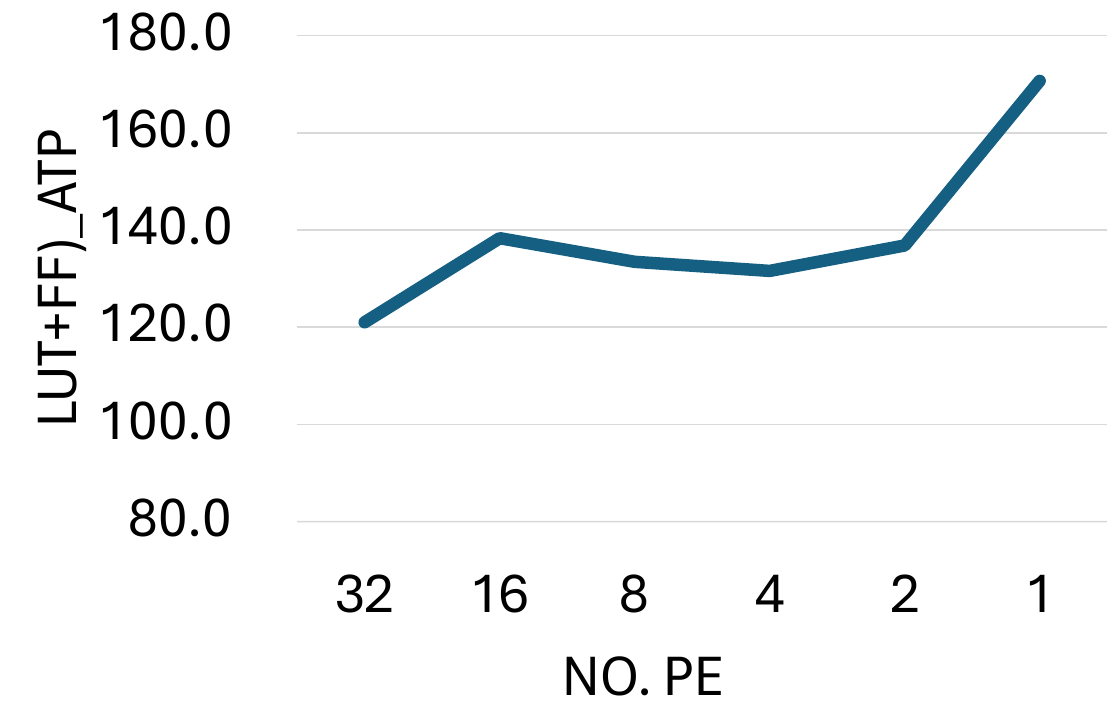}
        \caption{$N = 4096, \lceil \log_2q \rceil = 32$}
    \end{subfigure}
    
    \begin{subfigure}[b]{0.49\columnwidth}
        \centering
        \includegraphics[width=\textwidth]{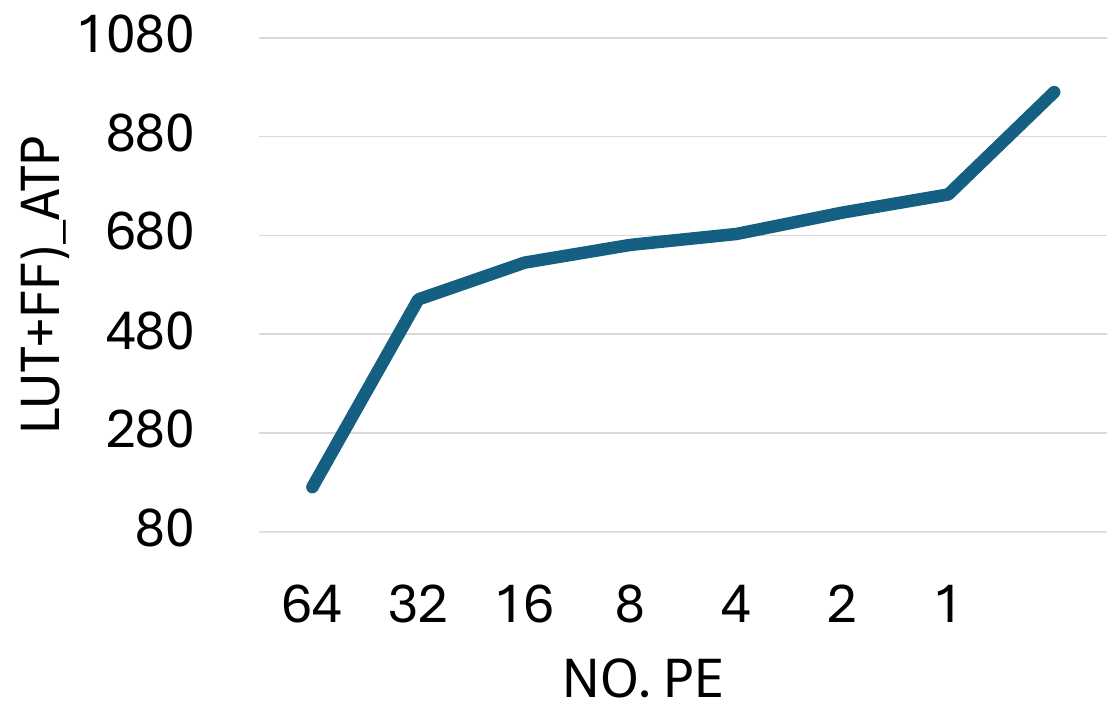}
        \caption{$N = 16384, \lceil \log_2q \rceil = 32$}
    \end{subfigure}
    \begin{subfigure}[b]{0.49\columnwidth}
        \centering
        \includegraphics[width=\textwidth]{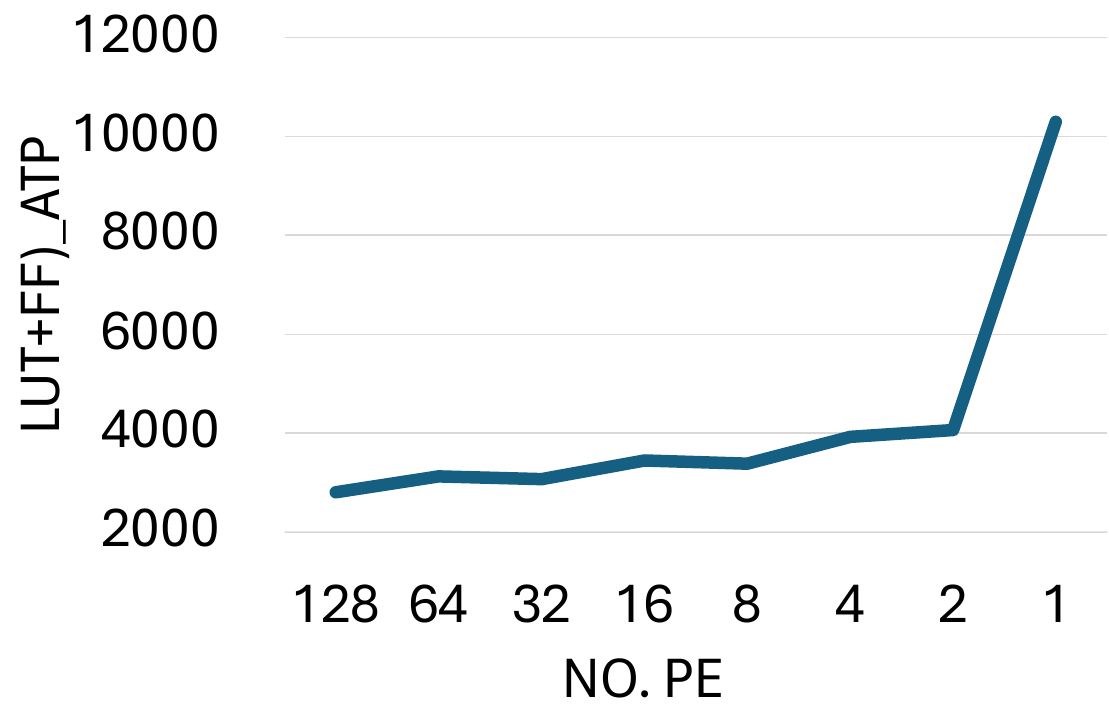}
        \caption{$N = 65536, \lceil \log_2q \rceil = 32$}
    \end{subfigure}  
    \caption{The relationship between (LUT+FF)\_ATP and CBU number $N_{pe}$ with different polynomial degree $N$}
    \label{ATP}
\end{figure}

\begin{table*}[h]
\caption{Evaluation results of our design under single-modulus and comparison with previous work}
\label{Evaluation results of our design under single-modulus and comparison with previous work}
\renewcommand\arraystretch{1.3}
\begin{tabular}{cccccccccc}
\hline
\rowcolor[HTML]{EFEFEF} 
Work & Platform & N & $log_2q$ & PE Num & LUT/FF/BRAM/DSP & Clock Cycles & Freq & NTT Latency & (LUT+FF)\_ATP       \\ 
\rowcolor[HTML]{EFEFEF} 
 &  &  &  &  &  & (NTT) & (Mhz) & ($\mu$s) &        \\ \hline
~\cite{zhang2020highly} & Artix-7 & & & 2 PE & 847/375/6/2     & 2569 & 244 & 10.5 & 12.8  \\ \cline{1-2} \cline{5-10} 
~\cite{kuo2017high}     & Artix-7 & & & -    & 2832/1381/10/8  & 1590 & 150 & 10.6 & 44.7 \\ \cline{1-2} \cline{5-10} 
                       &         & & & 1 PE & 575/-/11/3      & 5160 &     & 41.3 & 23.7 \\ \cline{5-7} \cline{9-10} 
                       &         & & & 8 PE & 2584/-/16/24    & 680  &     & 5.4  & 14.0 \\ \cline{5-7} \cline{9-10} 
\multirow{-3}{*}{~\cite{9171507}} & \multirow{-3}{*}{Virtex-7} & & & 32 PE & 17188/-/48/96 & 680 & \multirow{-3}{*}{125} & 1.6 & 27.5 \\ \cline{1-2} \cline{5-10} 
~\cite{9882331}         & Virtex-7& & & 16 PE& 8515/3618/12/49 & 334  & 172 & 1.9  & 23.1 \\ \cline{1-2} \cline{5-10} 
~\cite{10558123}       &Virtex-7 & & & - & 9100/7200/8/28 & 359 & 256 & 1.4 & 22.8\\  \cline{1-2} \cline{5-10} 
\textbf{Ours}          & Virtex-7& & & \textbf{1 PE} & \textbf{466/246/17.3\textsuperscript{a}/3}& \textbf{5135}& \textbf{291}& \textbf{17.6} & \textbf{12.6}  \\ \cline{1-2} \cline{5-10} 
\textbf{Ours}          & Virtex-7& & & \textbf{2 PE} & \textbf{820/486/17.3\textsuperscript{a}/6}& \textbf{2575}& \textbf{292}& \textbf{8.8} & \textbf{11.5}  \\ \cline{1-2} \cline{5-10} 
\textbf{Ours}          & Virtex-7& & & \textbf{4 PE} & \textbf{1738/967/17.3\textsuperscript{a}/12}& \textbf{1295}& \textbf{299}& \textbf{4.3} & \textbf{11.7}  \\ \cline{1-2} \cline{5-10} 
\textbf{Ours}          & Virtex-7& & & \textbf{8 PE} & \textbf{3267/1932/17.3\textsuperscript{a}/24}& \textbf{655}& \textbf{302}& \textbf{2.2} & \textbf{11.3}  \\ \cline{1-2} \cline{5-10} 
\textbf{Ours}          & Virtex-7& \multirow{-12}{*}{1024} & \multirow{-12}{*}{14} & \textbf{16 PE} & \textbf{6560/3859/17.3\textsuperscript{a}/48}& \textbf{335}& \textbf{285}& \textbf{1.2} & \textbf{12.2}  \\ \hline \hline
                       &         & &32& 8 PE & 6260/7132/20/56 & 3255 & 235 & 13.9 & 185.5 \\ \cline{4-10} 
                       
\multirow{-2}{*}{~\cite{10171572}} & \multirow{-2}{*}{Virtex-7} & &32 & 32 PE & 46547/26937/192/448 & 572& 184 & 3.1 & 228.5  \\ \cline{1-2} \cline{4-10} 
~\cite{8866755}       & Virtex-7  & & 32 & - & 70000/-/129/448    & 460 & 200 & 2.3 & 161.0 \\ \cline{1-2} \cline{4-10} 
~\cite{9761841}       & Virtex-7  & & 32 & - & 14004/8662/87/80   & 9750& 250 & 39.0& 884.0 \\ \cline{1-2} \cline{4-10} 
~\cite{9882331}       & Virtex-7  & & 24 & 16 PE & 14591/6468/12/80   & 1543& 121 & 12.8& 269.6 \\ \cline{1-2} \cline{4-10} 
~\cite{8675244}       & Zynq UltraScale+  & & 30 & - & 64000/-/400/200 &- & 225 & 73 & 4672\\ \cline{1-2} \cline{4-10} 
~\cite{10558123}       &Virtex-7 & & 32 & - & 23000/20800/8/128 & 1176 & 196 & 6 & 262.8\\  \cline{1-2} \cline{4-10} 

\textbf{Ours}        & Virtex-7 &  & 32 & \textbf{1 PE} & \textbf{1245/601/192/7} & \textbf{24595}& \textbf{266}  & \textbf{92.4} & \textbf{170.6} \\ \cline{1-2} \cline{4-10} 
\textbf{Ours}        & Virtex-7 &  & 32 & \textbf{2 PE} & \textbf{1604/1198/192/14} & \textbf{12307}& \textbf{252}  & \textbf{48.8} & \textbf{136.8} \\ \cline{1-2} \cline{4-10} 
\textbf{Ours}        & Virtex-7 &  & 32 & \textbf{4 PE} & \textbf{3301/2393/192/28} & \textbf{6163}& \textbf{267}  & \textbf{23.1} & \textbf{131.6} \\ \cline{1-2} \cline{4-10} 
\textbf{Ours}        & Virtex-7 &  & 32 & \textbf{8 PE} & \textbf{6247/4784/192/56} & \textbf{3091}& \textbf{255}  & \textbf{12.1} & \textbf{133.5} \\ \cline{1-2} \cline{4-10} 
\textbf{Ours}        & Virtex-7 &  & 32 & \textbf{16 PE} & \textbf{12671/9564/192/112} & \textbf{1555}& \textbf{250}  & \textbf{6.2} & \textbf{138.3} \\ \cline{1-2} \cline{4-10}

\textbf{Ours}        & Virtex-7 & \multirow{-13}{*}{4096} & 32 & \textbf{32 PE} & \textbf{25040/19203/192/224} & \textbf{787}& \textbf{288}  & \textbf{2.7} & \textbf{121.0} \\ \hline \hline

\cite{10.1145/3373376.3378523} & Arria10 GX 1150 &  & 27 & - & 582148/1554005/3986/2018 & 1536& 300  & 5.12 & 10936.1 \\ \cline{1-2} \cline{4-10} 
\cite{8318681} & XC6VLX240T &  & 30 & - & 72163/63086/53.4/2736 & 48000 & $100 \& 200$  & 480 & 64919.5 \\ \cline{1-2} \cline{4-10} 
\cite{10558123}       &Virtex-7 & & 32 & - & 26900/26900/32.5/144 & 4320 & 200 & 21.6 & 1162.1\\  \cline{1-2} \cline{4-10} 
\cite{8928952}      &Virtex-7 & & 32 & - & 2800/1200/80/39 & 4320 & 168 & 975.2 & 3900.8\\  \cline{1-2} \cline{4-10} 
\textbf{Ours}        & Virtex-7 &  & 32 & \textbf{16 PE} & \textbf{13826/9566/1152/112} & \textbf{7187}& \textbf{254}  & \textbf{28.3} & \textbf{660.9} \\ \cline{1-2} \cline{4-10} 
\textbf{Ours}        & Virtex-7 &  & 32 & \textbf{32 PE} & \textbf{25052/19142/1152/224} & \textbf{3603}& \textbf{255}  & \textbf{14.1} & \textbf{624.7} \\ \cline{1-2} \cline{4-10} 

\textbf{Ours}        & Virtex-7 & \multirow{-7}{*}{16384}  & 32 & \textbf{64 PE} & \textbf{49987/38151/1152/448} & \textbf{1811}& \textbf{290}  & \textbf{6.2} & \textbf{550.5} \\ \hline\hline


\end{tabular}
\smallskip
\textit{
    \textsuperscript{a}This part is synthesized as LUTRAMs (results transformed to RAMB36).}
\end{table*}

\begin{table*}[ht]
\centering
\renewcommand\arraystretch{1.3}
\caption{Evaluation results of our design under multi-modulus and comparison with previous work}
\label{Evaluation results of our design under large parameters and comparison with previous work}
\begin{tabular}{cccccccccc}
\hline
\rowcolor[HTML]{EFEFEF} 
Work & Platform & N & $log_2q$ & LUT/FF/BRAM/DSP & Freq (Mhz) & Latency ($\mu$s) & (LUT+FF)\_ATP       \\ \hline
~\cite{10171572} & Virtix-7 & 4096 & 60 & 43696/50322/192/448 & 204 & 4.68 & 440.0  \\ \hline
~\cite{9171507}  & Virtix-7 & 4096 & 60 & 99384/-/176/992     & 125 & 7.7  & 765.25  \\ \hline
~\cite{9116470} (Perf-Opt.)  & Virtix-7 & 4096 & 60 & 338000/138000/1984/768 & 125 & 7 & 3332  \\ \hline
~\cite{9116470} (Balanced)   & Virtix-7 & 4096 & 60 & 22000/17000/248/96 & 125 & 26 & 1014  \\ \hline
~\cite{9116470} (Area-Opt.)  & Virtix-7 & 4096 & 60 & 27000/26000/31/12 & 142 & 173 & 9169  \\ \hline
\textbf{Ours (Perf-Opt.)}  & Virtix-7 & 4096 &60&	\textbf{44420/29415/384/576}&\textbf{314}&\textbf{2.5}&\textbf{184.9}\\ \hline
\textbf{Ours (Balanced)}   & Virtix-7 & 4096 &60&	\textbf{11222/7358/384/144}&\textbf{268}&\textbf{11.5}&\textbf{214.4}\\ \hline
\textbf{Ours (Area-Opt.)}  & Virtix-7 & 4096 &60&	\textbf{2133/924/384/18}&\textbf{256}&\textbf{96.1}&\textbf{293.8}\\ \hline \hline
~\cite{8675244} & Zynq UltraScale+ & 4096 & 180 & 63522/25622/400/200 & 225 & 73 & 6507.5  \\ \hline
~\cite{10171572} & Virtex-7 & 4096 & 180 & 131088/150966/576/1344 & 204 & 4.68 & 1320.0  \\ \hline
~\cite{8866755} & Virtex-7 & 4096 & 180 & 140000/-/258/1198 & 200 & 6.9 & 966  \\ \hline
~\cite{9755024} & Zynq UltraScale+ & 4096 & 128 & 46591/35551/392/400 & 250 & 24.58 & 2019.0  \\ \hline
\textbf{Ours (Perf-Opt.)} & Virtix-7 & 4096 &180 & \textbf{132608/84807/1152/1984}	& \textbf{311}& \textbf{2.5}& \textbf{550.3} \\ \hline
\textbf{Ours (Balanced)} & Virtix-7 & 4096 &180 & \textbf{35745/21247/1152/496}	& \textbf{255}& \textbf{12.1}& \textbf{692.1} \\ \hline
\textbf{Ours (Area-Opt.)} & Virtix-7 & 4096 &180 & \textbf{6221/2659/1152/62}	& \textbf{231}& \textbf{106.6}& \textbf{947.0} \\ \hline \hline
\end{tabular}
\end{table*}

\subsection{Power Analysis}
We conducted another experiment to analyze the power usage of our accelerator. Figure~\ref{fig:Power Consumption} illustrates the total power consumption with different numbers of CBUs. The observed trend indicates a decrease in overall power consumption from 4.572W to 2.465W as the number of CBUs decreases from 32 to 1, as fewer CBUs result in lower power consumption in the calculation process.

\begin{figure}
    \centering
    \includegraphics[width=1\columnwidth]{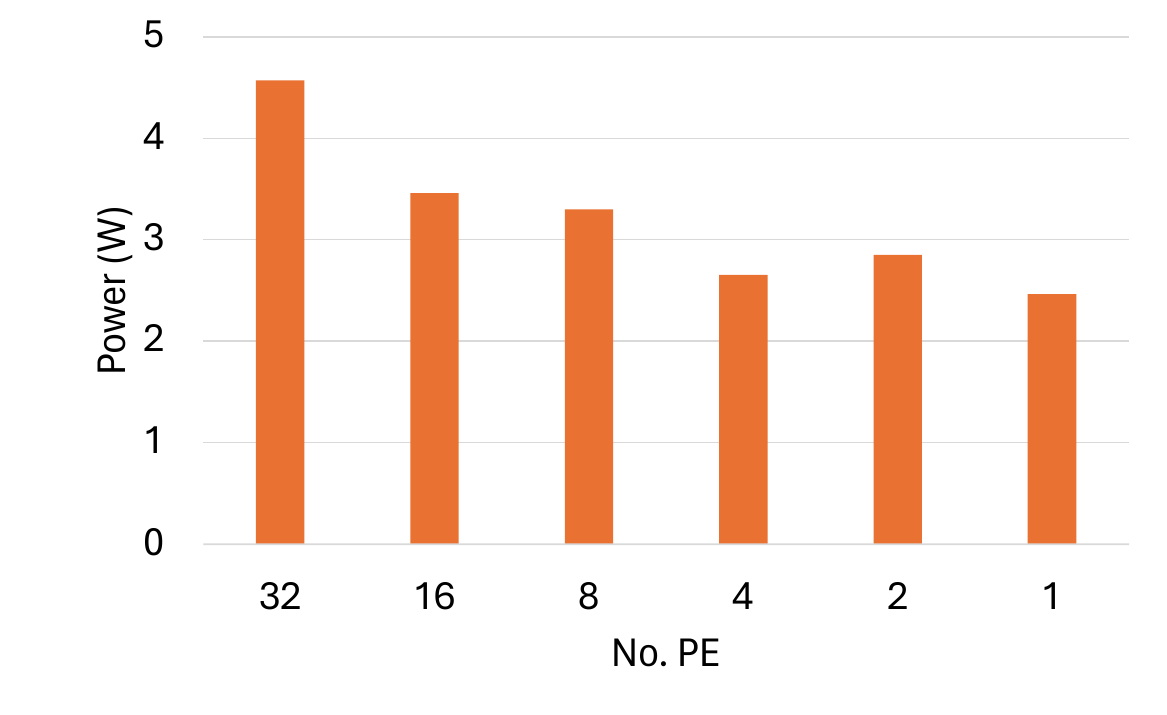}
    \caption{Overall Power Consumption with Different CBU Number $N_{pe}$ ($N = 4096, \lceil \log_2q \rceil = 32$)}
    \label{fig:Power Consumption}
\end{figure}

However, the energy consumption per operation exhibits a different trend compared to overall power consumption when the number of PEs decreases, which is shown in Figure~\ref{energy}. The energy per operation is calculated by dividing total energy consumption by the total number of butterfly operation. As the number of CBUs decreases from 32 to 1, we observed a substantial increase in energy consumption per operation, rising from 0.040 $\mu$J to 0.821 $\mu$J. This can be attributed to the longer computation time with fewer CBUs, leading to larger energy consumption from the controller and the memory. These results emphasize the importance of maximizing the CBU count to achieve optimal energy efficiency.

\begin{figure}[h]
\centering    
 \includegraphics[width=1\columnwidth]{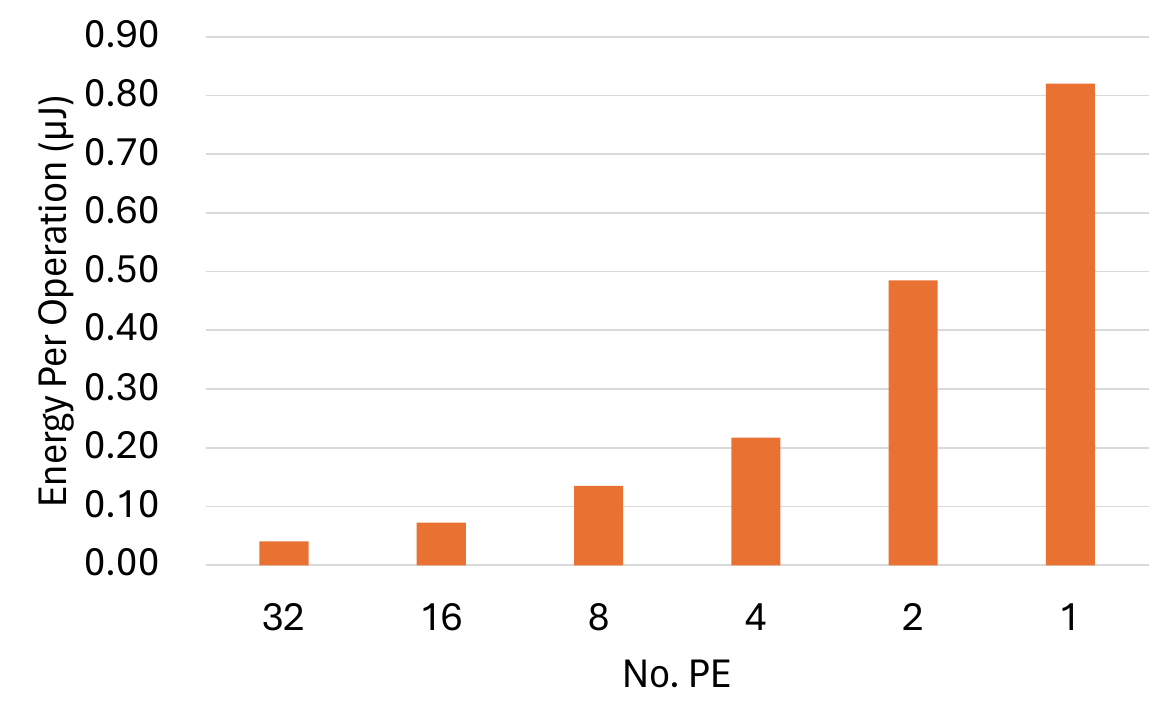}
    \caption{Energy Consumption Per Operation with Different CBU Number $N_{pe}$}
    \label{energy}
\end{figure}

\subsection{Evaluation of single-modulus polynomial multiplications}

In Table \ref{Evaluation results of our design under single-modulus and comparison with previous work}, we present a comparative analysis of key results for single-modulus settings against prior research. We focus on varying the number of CBUs, denoted as $N_{pe}$, for three scenarios: (1) $N=1024$ with $\lceil \log_2q \rceil=14$, (2) $N=4096$ with $\lceil \log_2q \rceil=32$ and (3) $N=16384$ with $\lceil \log_2q \rceil=32$. The key metrics we evaluate are latency and (LUT+FF)\_ATP.

For the $N=1024$ case with $N_{pe}=16$, our latency optimization achieves the fastest performance when the number of CBUs is equal. With 16 CBUs, it only costs $\mathbf{1.2 \mu s}$. Additionally, our (LUT+FF)\_ATP is the smallest when $N_\text{pe}$=8, at just $\mathbf{11.3}$, which improve from $1.13\times$ to $3.95\times$ compared to other baselines. It is worth noting that the clock cycles of the work~\cite{9882331} are less than ours. It is because we add more buffers before and after the CBU Array to decrease the delay in the path and optimize the frequency. Hence, in the same FPGA platform, we could obtain higher frequency and lower latency.

In the $N=4096$ scenario with $N_\text{pe}=32$, our approach achieves the latency of $\mathbf{2.7 \mu s}$, with the minimum (LUT+FF)\_ATP value of $\mathbf{121.0}$, representing an improvement ranging from $\mathbf{1.33x}$ to $\mathbf{38.61x}$ compared to the other methods.

In the study for $N=16324$, we only show the result with PE number varying from 16 to 32. Our (LUT+FF)\_ATP is \textbf{550.5}, with the reduction ratio ranging from $\mathbf{2.1x}$ to $\mathbf{117.9x}$, demonstrating our efficient resource utilization. Additionally, we were able to achieve a high frequency, although slightly below the frequency observed in the design presented in ~\cite{10.1145/3373376.3378523}, which utilized the more high-end Intel Arria10 GX 1150 board incorporating 20nm technology.

\subsection{Evaluation of multi-modulus polynomial multiplications}

Finally, we conducted experiments with parameters $\lceil\log_2q\rceil=60$ and $180$ at $N=4096$ by employing the RNS to decompose the modulus, with the results shown in  Table~\ref{Evaluation results of our design under large parameters and comparison with previous work}. In order to accommodate various operational requirements, we have established three design configurations: a performance-optimized configuration with 32 PEs, a balanced configuration with 8 PEs, and an area-optimized configuration with 1 PE. 

For $\lceil\log_2q\rceil=60$, we set $N_\text{q}=2$ for RNS decomposition. 
The performance-optimized configuration achieved the lowest (LUT+FF)\_ATP of $\mathbf{184.9}$, demonstrating a significant improvement ranging from $\mathbf{2.38x}$ to $\mathbf{49.59x}$ compared to the other methods. Additionally, this configuration also resulted in the smallest timing delay of $\mathbf{2.5 \mu s}$, representing an improvement ranging from $\mathbf{1.87x}$ to $\mathbf{69.2x}$. Even for the balanced configuration, we can achieve satisfactory time and area costs, providing a good tradeoff between performance and resource utilization.

For $\lceil\log_2q\rceil=180$, we set $N_\text{q}=6$ for decomposition and obtained the lowest (LUT+FF)\_ATP of $\mathbf{550.3}$ in the performance-optimized configuration, which is $\mathbf{1.76x}$ to $\mathbf{11.83x}$ better than the other methods. The smallest latency achieved in this configuration is $\mathbf{2.5 \mu s}$ with an improvement of $\mathbf{2.76x}$ to $\mathbf{29.2x}$.

\section{Conclusion}
\label{sec:conclude}

In this paper, we introduce HF-NTT, a hazard-free and reconfigurable NTT accelerator tailored for flexible polynomial degrees, modulus sizes, and processing element numbers. HF-NTT incorporates a novel set of data flow methodologies specifically designed for hardware implementations. These methodologies are designed to prevent different hazards during computations to reduce the number of clock cycles. Additionally, HF-NTT applies a hardware-efficient Barrett modular multiplication approach and introduces a configurable butterfly unit that can dynamically adjust its data path to process different operations. HF-NTT is implemented through Chisel and synthesized and implemented through Vivado 2022.2. The experimental evaluations demonstrate that HF-NTT achieves excellent performance and efficiency in highly parallelized designs, making it suitable for a variety of real-world FHE application scenarios.

\bibliographystyle{IEEEtranS}
\bibliography{refs}

\end{document}